\documentclass[times,authoryear]{elsarticle}
\usepackage{amsmath}
\usepackage{amsthm}
\usepackage{amssymb}
\usepackage{amsfonts}

\newtheorem{theorem}{Theorem}

\newtheorem{corollary}[theorem]{Corollary}
\newtheorem{conjecture}[theorem]{Conjecture}

\newtheorem{definition}[theorem]{Definition}

\usepackage{tikz}
\usetikzlibrary{arrows}
\usepackage{caption}
\usepackage{mathtools}
\usepackage[OT2,T1]{fontenc}
\usepackage{microtype}
\renewcommand{\rmdefault}{ptm}
\usepackage[noend]{algpseudocode}
\usepackage[pdftex,pagebackref=false,hidelinks]{hyperref}

\DeclareMathOperator{\PAF}{PAF}
\DeclareMathOperator{\rowsum}{rowsum}
\DeclareMathOperator{\DFT}{DFT}
\DeclareMathOperator{\PSD}{PSD}
\DeclareMathOperator{\Aut}{Aut}

\newcommand{\N}{\mathbb{N}}
\newcommand{\Z}{\mathbb{Z}}
\newcommand{\E}{\text{E}}

\let\brace\undef
\let\brack\undef
\DeclarePairedDelimiter{\abs}{\lvert}{\rvert}

\DeclarePairedDelimiter{\floor}{\lfloor}{\rfloor}
\DeclarePairedDelimiter{\brack}{\lbrack}{\rbrack}
\DeclarePairedDelimiter{\brace}{\lbrace}{\rbrace}

\newcommand{\MC}{\textsc{MathCheck}}
\renewcommand{\today}{March 15, 2019}

\bibliographystyle{elsarticle-harv}

\newcommand\cyr
{%
\renewcommand\rmdefault{wncyr}%
\renewcommand\sfdefault{wncyss}%
\renewcommand\encodingdefault{OT2}%
\normalfont%
\selectfont%
}

\makeatletter
\def\ps@pprintTitle{%
     \let\@oddhead\@empty
     \let\@evenhead\@empty
     \def\@oddfoot{\footnotesize\itshape
       To appear in the \ifx\@journal\@empty Journal of Symbolic Computation
       \else\@journal\fi\hfill\today}%
     \let\@evenfoot\@oddfoot}
\makeatother

\newcommand{\shuffle}{\mathbin{\text{\cyr x}}}

\begin{document}

\begin{frontmatter}

\title{Applying Computer Algebra Systems with SAT Solvers
to the Williamson Conjecture}

\author[uw]{Curtis Bright}
\author[wlu]{Ilias Kotsireas}
\author[uw]{Vijay Ganesh}
\address[uw]{University of Waterloo}
\address[wlu]{Wilfrid Laurier University}

\begin{abstract}
We employ tools from the fields of symbolic computation and satisfiability
checking---namely, computer algebra systems and SAT solvers---to study the
Williamson conjecture from combinatorial design theory and increase the
bounds to which Williamson matrices have been enumerated.
In particular, we completely enumerate all Williamson matrices of
even order up to and including $70$ which gives
us deeper insight into the behaviour and distribution of Williamson matrices.
We find that, in contrast to the case when the order is odd, Williamson
matrices of even order are quite plentiful and exist in every even order
up to and including $70$.
As a consequence of this and a new construction for 8-Williamson matrices
we construct 8-Williamson matrices in all odd orders up to and including $35$.
We additionally enumerate all Williamson matrices whose orders are divisible
by $3$ and less than $70$, finding one previously unknown set of
Williamson matrices of order $63$.
\end{abstract}

\begin{keyword}
Williamson matrices; Boolean satisfiability; SAT solvers; Exhaustive search; Autocorrelation
\end{keyword}

\end{frontmatter}

\section{Introduction}\label{sec:introduction}

In recent years SAT solvers have been used to solve or make progress on
mathematical conjectures which have otherwise
resisted solution from some of the world's best mathematicians.
Some prominent problems which fit into this trend
include the Erd\H{o}s discrepancy conjecture, which was open for
80 years and had a special case solved by~\cite{DBLP:conf/sat/KonevL14};
%(and has since been completely solved by~\cite{tao2015erdos});
the Ruskey--Savage conjecture, which has been open for 25 years
and had a special case solved by~\cite{DBLP:conf/cade/ZulkoskiGC15};
the Boolean Pythagorean triples problem, which was open for 30 years
and solved by~\cite{heule2016solving}; and the
determination of the fifth Schur number, which was open for 100 years
and solved by~\cite{DBLP:journals/corr/abs-1711-08076}.
Although these are problems
which arise in completely separate fields
and have no obvious connection to propositional
satisfiability checking, nevertheless SAT solvers were found
to be extremely effective at pushing the state-of-the-art and
sometimes absolutely crucial in the problem's ultimate solution.

In this paper we apply a SAT solver to the Williamson conjecture from
combinatorial design theory.
Our work is similar in spirit to the aforementioned works but
we would like to highlight two main differences.
Firstly, we employ an approach
inspired by SMT (SAT modulo theories) solvers
and use a SAT solver %(as introduced by~\cite{DBLP:conf/sat/GaneshOSDRS12})
that is able to learn conflict clauses through a piece of code
specifically tailored to the problem domain.
This code encodes domain-specific knowledge that an
off-the-shelf SAT solver would otherwise not be able to exploit.  This
framework is not limited to any specific domain;
any external library or function can be used
as long as it is callable by the SAT solver.  As we will see
in Section~\ref{sec:sat+cas}, the clauses that
are learned in this fashion can enormously cut down the search space as well
as the solver's runtime.

Secondly, similar in style to~\citep{DBLP:conf/cade/ZulkoskiGC15}
we incorporate functionality from computer algebra systems to
increase the efficiency of the search in what we
call the ``SAT+CAS'' paradigm. %, as described in Section~\ref{sec:sat+cas}.
This approach of combining computer algebra systems with SAT
or SMT solvers was also independently proposed
at the conference ISSAC by~\cite{DBLP:conf/issac/Abraham15}.
More recently, it has been argued by the SC$^2$ project~\citep{sc2} that
the fields of satisfiability checking and symbolic computation
are complementary and combining the tools of both fields
(i.e., SAT solvers and computer algebra systems)
in the right way can solve problems more efficiently than
could be done by applying the tools of either field in isolation,
and our work provides evidence for this view.

We describe the Williamson conjecture, its history, and state the necessary
properties of Williamson matrices that we require in Section~\ref{sec:willconj}.
In particular, we derive a new version of Williamson's product theorem that applies
to Williamson matrices of even order (Theorem~\ref{thm:willprodeven}).
We give an overview of the
SAT+CAS paradigm in Section~\ref{sec:sat+cas}, describe our SAT+CAS method
%in detail
in Section~\ref{sec:sat+casmethod}, and give a summary
of our results in Section~\ref{sec:results}.
The present work is an extension of our
previous work~\citep{bright2017sat+} that enumerated Williamson matrices
of even order up to order~$64$.  The present work extends this enumeration
to order~$70$ and extends the method to enumerate Williamson matrices
with orders divisible by~$3$.  In doing so, we find a previously
undiscovered set of Williamson matrices of order~$63$, the first new set of Williamson
matrices of odd order discovered since %Holzmann, Kharaghani, and Tayfeh-Rezaie
one of order $43$ was found
over ten years ago by~\cite{holzmann2008williamson}.
%which is now the largest known
%Williamson matrix of odd order not of Turyn-type.
Additionally, we improve our treatment of equivalence checking (see Section~\ref{sec:equivcheck}),
identify a new equivalence operation that applies to Williamson
matrices of even order (see Section~\ref{sec:willequiv}),
derive a new doubling construction for Williamson matrices (Theorem~\ref{thm:willdbl}),
and a new construction for 8-Williamson matrices (Theorem~\ref{thm:8will}).
Using this construction we construct 8-Williamson matrices in all
odd orders $n\leq35$, improving on the result of \cite{kotsireas2009hadamard}
that constructed 8\nobreakdash-Williamson matrices in all odd orders $n\leq29$.
Finally, in Section~\ref{sec:conclusion}
we use our experience developing
systems that combine SAT solvers with computer algebra systems
to give some guidelines about the kind of problems for which an approach is
likely to be effective.

\section{The Williamson conjecture}\label{sec:willconj}

\cite{williamson1944hadamard} introduced the matrices which
now bear his name (see Section~\ref{sec:background})
while developing a method of constructing Hadamard matrices.
Hadamard matrices are square
matrices with $\pm1$ entries and pairwise orthogonal rows;
%$n\times n$ matrices $H$ with $\pm1$ entries
%where $HH^T$ is the identity matrix times $n$.
they have a long history and many applications
such as to error-correcting codes~\citep{bose1959note}.
The \emph{Hadamard conjecture}
states that Hadamard matrices exist for all orders divisible by~$4$.
Williamson's construction has been extensively used to construct Hadamard matrices
in many different orders and the \emph{Williamson conjecture}
states that it can be used to construct a Hadamard matrix of any order divisible by~$4$;
\cite{turyn1972infinite} states it as follows:
\begin{quote}
Only a finite number of Hadamard matrices of Williamson type are known
so far; it has been conjectured that one such exists of any order $4t$.
\end{quote}
Williamson matrices have also found use in digital
communication systems and this motivated mathematicians from NASA's
Jet Propulsion Laboratory to construct Williamson matrices of order~$23$
while developing codes allowing the transmission of
signals over a long range~\citep{baumert1962}.
These Williamson matrices were consequently used to 
construct a Hadamard matrix of order $4\cdot23=92$~\citep{nasablog}.
(In some older works the Hadamard matrix constructed in this way was itself referred to
as a Williamson matrix but we follow modern convention and do not use
this terminology.)
Williamson matrices are also studied for their elegant mathematical properties
and their relationship to other mathematical conjectures~\citep{schmidt1999williamson}.

Although Williamson defined his matrices for both even and odd orders, %(see Section~\ref{sec:background}), 
most subsequent work has focused on the odd case.
A complete enumeration of Williamson matrices was completed
for all odd orders up to~$23$ by~\cite{baumert1965hadamard}.
A enumeration in orders~$25$ and~$27$ was completed by~\cite{40002775009}
but this enumeration was later found to be incomplete by~\cite{djokovic1995note},
who gave a complete enumeration in the order~$25$ as well as (in a previous paper)
the orders~$29$ and~$31$~\citep{dokovic1992williamson}.
The orders~$33$ and~$39$ were
claimed to be completely enumerated by~\cite{koukouvinos1988hadamard,koukouvinos1990there}
but these searches were demonstrated to be incomplete when a complete
enumeration of the orders $33$, $35$, and~$39$ was completed
by~\cite{dokovic1993williamson}.
%In particular, he showed that no Williamson matrices exist
Most recently, all odd orders up to~$59$ were enumerated by~\cite{holzmann2008williamson}
and the order $61$ was enumerated by~\cite{Lang2012}.

Historically, less attention was paid to the even order cases,
although generalizations of Williamson matrices
were explicitly constructed in even orders by~\cite{Wallis1974}
as well as~\cite{agayan1981recurrence}.
Williamson matrices were constructed in all
even orders up to~$22$ by~\cite{kotsireas2006constructions},
up to~$34$ by~\cite{DBLP:conf/casc/BrightGHKNC16},
and up to~$42$ by~\cite{DBLP:journals/jar/ZulkoskiBHKCG17}.
\cite{kotsireas2006constructions} provided a exhaustive search
up to order $18$ but otherwise these works did not
contain a complete enumerations.
A complete enumeration in the even orders up to~$44$ was given
by~\cite{DBLP:phd/basesearch/Bright17} and this was extended to order $64$ by~\cite{bright2017sat+}.
%We are not aware of any other constructions of Williamson matrices in even orders,
%though generalizations of Williamson matrices 
%have been constructed in even orders~\cite{Wallis1974}.
%The algorithms used for enumerating Williamson matrices prior to 2016
%(e.g.,~\cite{baumert1965hadamard,40002775009,koukouvinos1988hadamard,dokovic1993williamson,holzmann2008williamson})
%have all used properties only available in odd orders. 
%In light of this, it is interesting to develop algorithms
%for enumerating Williamson matrices which work in even orders.

One reason why more attention has traditionally
been given to the odd order case is
due to the fact that if it was possible to construct Williamson
matrices in all odd orders this would resolve the
Hadamard conjecture.  On the other hand,
constructing Williamson matrices in all even orders
would not resolve the Hadamard conjecture
because Hadamard matrices constructed using
Williamson matrices of even order have orders which are divisible by~$8$.
However, it is still not even known if Hadamard
matrices exist for all orders divisible by~$8$, so nevertheless
studying Williamson
matrices of even order
has the potential to shed light on the Hadamard conjecture as well.

%On the other hand, if it was possible to prove that Williamson matrices
%exist for all odd orders this would resolve the Hadamard conjecture,
%leading to the related \emph{Williamson conjecture}
%that Williamson matrices exist in all odd orders.
%As \cite{turyn1972infinite} wrote:
%\begin{quote}
%It has been conjectured that an Hadamard matrix of this [Williamson] type might exist of
%every order $4t$, at least for $t$ odd.
%\end{quote}
The Williamson conjecture was shown to be false by \cite{dokovic1993williamson}
who showed that such matrices do not exist in order~$35$.
Later, when an enumeration of Williamson matrices for odd orders $n<60$
was completed~\citep{holzmann2008williamson} it was found that Williamson matrices also do not
exist for orders $47$, $53$, and~$59$ but exist for all other odd orders under~$65$
since Turyn's construction~\citep{turyn1972infinite} works in orders $61$ and~$63$.

In this paper we provide for the first time
a complete enumeration of Williamson
matrices in the orders~$63$, $66$, $68$, $69$, and~$70$.
In particular, we show that Williamson matrices exist in every even order up to $70$.
This leads us to state what we call the \emph{even Williamson conjecture}:

\begin{conjecture}\label{conj:will}
Williamson matrices exist in every even order.
\end{conjecture}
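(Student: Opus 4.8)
Because the final statement is a conjecture rather than a theorem, what follows is a proposed program rather than a complete argument, and I expect it to remain open; the plan is to combine explicit constructions, composition theorems, and the finite computer search of this paper in the building-block style standard in design theory. The first step is to reduce the infinitely many even orders to a sparser family using closure under doubling. By Theorem~\ref{thm:willdbl} the set $S$ of even orders admitting Williamson matrices is closed under $n \mapsto 2n$, so writing an arbitrary even order as $2^{a}m$ with $m$ odd and $a\ge 1$ and iterating the doubling, it suffices to place the single order $2m$ in $S$ for every odd $m$. This collapses the conjecture to the orders $2m \equiv 2 \pmod 4$.

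The second step is to realize these residue-$2$ orders. Here I would use the even product theorem (Theorem~\ref{thm:willprodeven}) to build larger orders multiplicatively from smaller ingredients already in $S$, and the construction of Theorem~\ref{thm:8will} together with known families of complementary (base) sequences to supply an infinite stock of such ingredients. The complete enumeration carried out in this paper supplies every base case up to order $70$, in particular confirming that $S$ contains \emph{every} even order in that range; these serve as seeds for the composition theorems and handle the sporadic small orders that no uniform construction reaches.

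The main obstacle is the final coverage step: one must show that the direct families, closed under the product and doubling theorems, account for \emph{every} odd multiplier $m$ with no arithmetic progression left uncovered. This is where I expect the difficulty to concentrate, for two reasons. Composition theorems multiply orders and therefore tend to miss entire progressions of multipliers, exactly as in the still-open Hadamard conjecture; and the computer certificate is intrinsically finite, verifying only orders $\le 70$, so an unconditional proof demands an infinite construction that is presently unavailable. Moreover, since Williamson matrices of even order yield Hadamard matrices of order divisible by $8$, any construction covering all even orders would settle the existence of Hadamard matrices in all orders divisible by $8$, a problem that is itself open; this strongly suggests that the multiplier-coverage step is genuinely hard and unlikely to succumb to the composition machinery alone.
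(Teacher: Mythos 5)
The paper does not prove this statement, and cannot: it is stated as an open conjecture, supported only by the computational evidence of the complete enumeration showing Williamson matrices exist in every even order up to $70$ (the authors say explicitly that they do not know how to prove Conjecture~\ref{conj:will}). You correctly recognize this and frame your text as a program rather than a proof, which is the right posture. However, the program itself rests on misreadings of the paper's theorems, and its first reduction step fails outright. Theorem~\ref{thm:willdbl} takes Williamson sequences of \emph{odd} order $n$ to order $2n$ --- the shift operation uses an offset of $(n-1)/2$, which requires $n$ odd --- so the set $S$ of even orders in $S$ is \emph{not} closed under $n\mapsto 2n$ by that theorem, and your collapse of the conjecture to orders $\equiv 2 \pmod 4$ has no basis. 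Worse, even the reduced goal ``place $2m$ in $S$ for every odd $m$'' cannot be reached by doubling from odd orders, because Williamson matrices do not exist in odd orders $35$, $47$, $53$, and $59$. The paper emphasizes precisely this point: order $70=2\cdot 35$ is unreachable by Turyn's doubling since order $35$ is empty, which is why existence in order $70$ was open until this enumeration settled it. Any doubling-based reduction is therefore refuted by the paper's own headline example.

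Your second step compounds this with two misattributions. Theorem~\ref{thm:willprodeven} is not a multiplicative composition theorem: ``product'' in Williamson's product theorem refers to the product of sequence \emph{entries} $a_ib_ic_id_i$, and the even-order version is a necessary structural condition used (via Corollary~\ref{cor:willprod}) to prune the search space --- it constructs nothing and cannot ``build larger orders multiplicatively.'' Likewise Theorem~\ref{thm:8will} runs in the direction opposite to what your program needs: it consumes Williamson sequences of order $2n$ to produce 8-Williamson sequences of order $n$, so it cannot supply ingredients for $S$. What remains of your proposal after these corrections is only its final diagnosis --- that a finite certificate up to $70$ plus the known composition machinery cannot cover all even orders, and that the connection to Hadamard matrices of order divisible by $8$ signals genuine difficulty --- and that diagnosis agrees with the paper's own assessment of why the conjecture stays open.
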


The fact that Williamson matrices of even order turn out to be somewhat
plentiful gives some evidence for the truth of Conjecture~\ref{conj:will}.
Though we do not know how to prove Conjecture~\ref{conj:will}
our enumeration could potentially uncover structure in Williamson matrices
which might then be exploited in a proof of the conjecture.

Additionally, we point out that the existence of Williamson matrices
of order~$70=2\cdot35$ is especially interesting since $35$ is the
smallest order for which Williamson matrices do not exist.
Using complex Hadamard matrices, \cite{turyn1970} showed the existence
of Williamson matrices of odd order~$n$
implies the existence of Williamson matrices
of orders $2^kn$ for $k=1$, $2$, $3$, $4$.
Since Williamson matrices exist for all odd orders $n<35$
Turyn's result implies that Williamson matrices exist for all even orders
strictly less than $70$.
Since Williamson matrices of order $35$ do not exist
Turyn's result cannot be used to show the existence of Williamson
matrices of order $70$; the question of existence in
order $70$ was open until this paper.
%Although
%we know of no link between Williamson matrices of order $2n$ and
%those of order $n$ such a link does exist for the more general class
%of so-called \emph{Williamson-type} matrices~\citep[Def.~3.3]{seberry1992hadamard}.
%\cite{Wallis1974} shows how to construct
%Williamson-type matrices of order $2n$ from
%Williamson matrices of order $n$ and
%\cite{baumert1965hadamard} show how to construct Williamson-type
%matrices of order $2n$ from Williamson-type matrices of order $n$.

We also determine that there are exactly two sets of Williamson matrices
(up to the equivalence given in Section~\ref{sec:willequiv}) of order $63$.
One of these falls under the aforementioned construction given
by~\cite{turyn1972infinite} while the other is new and is the first newly
discovered set of Williamson matrices in an odd order since one
was found using an exhaustive search in order $43$ by~\cite{holzmann2008williamson}.
In order $69$ our enumeration method produced just one set of Williamson
matrices and that set falls under the construction given by Turyn.

%\section{Background}\label{sec:background}

\subsection{Williamson matrices and sequences}\label{sec:background}

We now give the background on Williamson matrices
and their properties that are
necessary to understand the remainder of the paper.
The definition of Williamson matrices is motivated by the following theorem
used for constructing Hadamard matrices by~\cite{williamson1944hadamard}.

\begin{theorem}
\label{thm:williamson}
Let\/ $n \in \N$ and let\/ $A$, $B$, $C$, $D \in \{\pm 1\}^{n\times n}$.
Further, suppose that
\begin{enumerate}
\item $A$, $B$, $C$, and\/ $D$ are symmetric;
\item $A$, $B$, $C$, and\/ $D$ commute pairwise (i.e., $AB=BA$, $AC=CA$, etc.);
\item $A^2 + B^2 + C^2 + D^2 = 4nI_n$, where\/ $I_n$ is the identity
  matrix of order\/ $n$.
\end{enumerate}
Then 
\[
\begin{bmatrix}
A & B & C & D \\
-B & A & -D & C\\
-C & D & A & -B\\
-D & -C & B & A
\end{bmatrix}
\]
is a Hadamard matrix of order $4n$. 
\end{theorem}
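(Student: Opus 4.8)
The plan is to verify the Hadamard condition directly by computing the product of the block matrix with its transpose and showing it equals $4nI_{4n}$. Call the $4n\times 4n$ block matrix $H$. Since the entries of $A,B,C,D$ are $\pm1$, the matrix $H$ has $\pm1$ entries, so it suffices to show that $HH^\top = 4nI_{4n}$, which is equivalent to saying the rows of $H$ are pairwise orthogonal and each row has squared norm $4n$.

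First I would observe that because $A,B,C,D$ are symmetric (condition~1), the transpose $H^\top$ is obtained by transposing each block and reflecting the block positions. I would then compute $HH^\top$ as a $4\times4$ array of $n\times n$ blocks, where each block is a sum of four products of the form $X Y^\top = XY$ (using symmetry). The four diagonal blocks of $HH^\top$ each reduce to $A^2+B^2+C^2+D^2$, which equals $4nI_n$ by condition~3; this handles the squared-norm requirement and gives the diagonal $4nI_{4n}$.

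The main work is the off-diagonal blocks. Each off-diagonal block of $HH^\top$ is a signed sum of commutators and cross terms such as $AB-BA$, $CD-DC$, $AC-CA$, and so on. Here is where condition~2 is essential: because $A,B,C,D$ commute pairwise, each such combination telescopes to the zero matrix. I would write out one representative off-diagonal block in full—say the $(1,2)$ block—collect the terms, and show the pairwise-commuting hypothesis forces cancellation; the remaining off-diagonal blocks follow by the same pattern (or by symmetry of the construction). I expect the main obstacle to be purely bookkeeping: keeping track of the signs and the twelve cross terms across all six distinct off-diagonal block positions without error, since a single sign slip would break the cancellation. Once all off-diagonal blocks are shown to vanish, we conclude $HH^\top=4nI_{4n}$, so $H$ is Hadamard of order $4n$.
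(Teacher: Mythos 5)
Your proposal is correct, and it is the standard direct verification; note that the paper itself states this theorem without proof, simply citing \cite{williamson1944hadamard}, so there is no internal proof to diverge from. One small simplification to your bookkeeping: writing $H$ for the block matrix, symmetry of $A,B,C,D$ gives $H^\top$ explicitly, and each of the six upper off-diagonal blocks of $HH^\top$ is a sum of exactly \emph{two} commutators---for instance the $(1,2)$ block is $(BA-AB)+(DC-CD)$ and the $(1,3)$ block is $(CA-AC)+(BD-DB)$---each vanishing by condition~2, while the lower blocks then vanish automatically because $HH^\top$ is symmetric, so no separate check is needed there.
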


To make the search for such matrices more tractable, and in particular
to make condition 2 trivial, Williamson also required the matrices
$A$, $B$, $C$, $D$ to be circulant matrices, as defined below.
\begin{definition}\label{def:circmatrix}
An\/ $n\times n$ matrix\/ $A=(a_{ij})$ is circulant if\/ $a_{ij}=a_{0,(j-i)\bmod n}$
for all\/ $i$ and\/ $j\in\brace{0,\dotsc,n-1}$.
\end{definition}
Circulant matrices $A$, $B$, $C$, $D$ which satisfy the conditions of Theorem~\ref{thm:williamson}
are known as a quadruple of \emph{Williamson matrices} in honour of Williamson.
Since Williamson matrices are circulant they
are defined in terms of their first row $[x_0,\dotsc,x_{n-1}]$
and since they are symmetric this row must be a symmetric sequence,
i.e., satisfy $x_i=x_{n-i}$ for $1\leq i<n$.
Given these facts, it is often convenient to work in terms of sequences rather than matrices.
When working with sequences in this context the
following function becomes very useful.
\begin{definition}
The \emph{periodic autocorrelation function} of the sequence\/ $A=[a_0,\dotsc,a_{n-1}]$ is
the function given by
\[ \PAF_A(s) \coloneqq \sum_{k=0}^{n-1}a_k a_{(k+s) \bmod n} . \] 
We also use\/ $\PAF_A$ to refer to a sequence
containing the values of the above function (which has period\/ $n$), i.e.,
\[ \PAF_A \coloneqq \brack[\big]{\PAF_A(0),\dotsc,\PAF_A(n-1)} . \]
\end{definition}
This function allows us to easily give a definition of Williamson matrices in terms of sequences.
\begin{definition}\label{def:williamsonsequences}
Four symmetric sequences\/ $A$, $B$, $C$, $D\in\brace{\pm1}^n$ are called a \emph{Williamson sequence quadruple}
(or simply \emph{Williamson sequences})
if they satisfy 
\begin{equation*}\label{eq:willdef}
\PAF_A(s) + \PAF_B(s) + \PAF_C(s) + \PAF_D(s) = 0
\end{equation*}
for\/ $s=1$, $\dots$, $\floor{n/2}$.
\end{definition}
It is straightforward to see that there is an equivalence between
such sequences and Williamson matrices~\citep[\S3.2]{DBLP:conf/casc/BrightGHKNC16}
because the first row of the matrix~$A^2$ is exactly the sequence~$\PAF_A$.
%because the $(0,s)$th entry of the matrix~$A^2$ can be seen to be exactly $\PAF_A(s)$.
Therefore, for the remainder of this paper we will work directly with these sequences instead of
Williamson matrices.

\subsection{Williamson equivalences}\label{sec:willequiv}

Given a Williamson sequence quadruple $A$, $B$, $C$, $D$ of order $n$ there are five types of invertible operations which can be applied to produce another set of Williamson sequences, though two of the operations only apply when $n$ is even.  These operations allow us to define \emph{equivalence classes} of sets of Williamson sequences.  If a single Williamson sequence quadruple is known it is straightforward to generate all sets of Williamson sequences in the same equivalence class, so it suffices to search for Williamson sequences up to these equivalence operations.

\begin{enumerate}
\item[E1.] (Reorder) Reorder the sequences $A$, $B$, $C$, $D$ in any way.
\item[E2.] (Negate) Negate all the entries of any of $A$, $B$, $C$, or~$D$.
\item[E3.] (Shift) If $n$ is even, cyclically shift all the entries in any of $A$, $B$, $C$, or~$D$ by an offset of~$n/2$. 
\item[E4.] (Permute entries) Apply an automorphism of the cyclic group $C_n$ to all the indices of the entries of each of $A$, $B$, $C$, and~$D$ simultaneously.
\item[E5.] (Alternating negation) If $n$ is even, negate every second entry in each of $A$, $B$, $C$, and~$D$ simultaneously.
\end{enumerate}

These equivalence operations are well known~\citep{holzmann2008williamson}
except for the shift and alternating negation operations which have not traditionally
been used because they only apply when $n$ is even.
In fact, they were overlooked until a careful examination of the sequences produced by
our enumeration method.

\subsection{Fourier analysis}

We now give an alternative definition of Williamson sequences using concepts from Fourier analysis.
First, we define the power spectral density of a sequence.

\begin{definition}\label{def:psd}
The \emph{power spectral density} of the sequence\/ $A=[a_0,\dotsc,a_{n-1}]$ is
the function
\[ \PSD_A(s) \coloneqq \abs[\big]{\DFT_A(s)}^2 \]
where\/ $\DFT_A$ is the \emph{discrete Fourier transform}
of\/ $A$, i.e., $\DFT_A(s) \coloneqq \sum_{k=0}^{n-1} a_k e^{2\pi iks/n}$.
Equivalently, we may also consider the power spectral density
to be a sequence containing the values of the above function, i.e.,
\[ \PSD_A \coloneqq \brack[\big]{\PSD_A(0),\dotsc,\PSD_A(n-1)} . \]
\end{definition}

It now follows by~\cite[Theorem~2]{dokovic2015compression} that Williamson
sequences have the following alternative definition.

\begin{theorem}\label{thm:willdef}
Four symmetric sequences\/ $A$, $B$, $C$, $D\in\brace{\pm1}^n$ are Williamson sequences
if and only if
\begin{equation}\label{eq:willpsd}\tag{$*$}
\PSD_A(s) + \PSD_B(s) + \PSD_C(s) + \PSD_D(s) = 4n
\end{equation}
for\/ $s=0$, $\dots$, $\floor{n/2}$.
\end{theorem}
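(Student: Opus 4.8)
The plan is to establish the equivalence between the two characterizations of Williamson sequences---the $\PAF$-based definition (Definition~\ref{def:williamsonsequences}) and the $\PSD$-based condition~\eqref{eq:willpsd}---by exploiting the Wiener--Khinchin relationship between periodic autocorrelation and the power spectral density. The central fact I would use is that $\PSD_A$ is the discrete Fourier transform of $\PAF_A$; concretely, $\PSD_A(s) = \abs{\DFT_A(s)}^2 = \DFT_A(s)\overline{\DFT_A(s)}$, and expanding this product and regrouping the double sum shows that $\PSD_A = \DFT_{\PAF_A}$. This is a standard convolution identity, and since the paper cites \cite[Theorem~2]{dokovic2015compression} for the result, I would invoke that relationship rather than rederive it in full.

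First I would sum the $\PSD$ identity over the four sequences. Because the DFT is linear, $\sum_X \PSD_X(s) = \DFT_{P}(s)$ where $P \coloneqq \PAF_A + \PAF_B + \PAF_C + \PAF_D$ is the coefficientwise sum of the four autocorrelation sequences. The Williamson condition of Definition~\ref{def:williamsonsequences} says precisely that $P(s) = 0$ for $s = 1, \dotsc, \floor{n/2}$. I would then observe that each $\PAF_X$ is symmetric (satisfying $\PAF_X(s) = \PAF_X(n-s)$, which follows directly from the symmetry $x_i = x_{n-i}$ of the sequences together with the definition of $\PAF$), so $P$ is symmetric as well; hence $P(s) = 0$ on $\{1, \dotsc, \floor{n/2}\}$ forces $P(s) = 0$ for all $s \neq 0$. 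Finally, the value at $s = 0$ is $P(0) = \sum_X \PAF_X(0) = \sum_X \sum_k x_k^2 = 4n$, since each entry is $\pm 1$.

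With $P$ identified as the sequence $(4n, 0, 0, \dotsc, 0)$, taking its DFT gives the constant sequence $\DFT_P(s) = 4n$ for every $s$, which is exactly the claimed identity~\eqref{eq:willpsd}. For the converse direction I would run the argument backward: if \eqref{eq:willpsd} holds for $s = 0, \dotsc, \floor{n/2}$, then by symmetry of $P$ (hence of $\DFT_P$) it holds for all $s$, so $\DFT_P$ is the constant $4n$; applying the inverse DFT recovers $P = (4n, 0, \dotsc, 0)$, which yields the $\PAF$ condition. The equivalence of the two statements on the restricted index range $\{0, \dotsc, \floor{n/2}\}$ is what makes the symmetry observation essential in both directions.

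The main obstacle is not conceptual but bookkeeping: I must verify carefully that the symmetry $\PAF_X(s) = \PAF_X(n-s)$ holds and that it transfers correctly to the Fourier side so that checking only half the indices is genuinely sufficient. The subtle point is handling the index $s = \floor{n/2}$ correctly when $n$ is even versus odd, since this is where the ``folding'' from the full range to the half range is tightest. Because the cited compression theorem already packages the Wiener--Khinchin identity, the remaining work is simply to confirm that the linearity of the DFT and the symmetry of each $\PAF_X$ combine to let the single scalar constraint per frequency replace the four-term $\PAF$ constraint, which I expect to be routine once the symmetry is nailed down.
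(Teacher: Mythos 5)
Your proposal is correct and takes essentially the same route as the paper, which establishes this theorem simply by citing Theorem~2 of \citet{dokovic2015compression}---precisely the Wiener--Khinchin identity $\PSD_X = \DFT_{\PAF_X}$ that you invoke---so your writeup just soundly fills in the standard details (linearity of the $\DFT$, symmetry of the $\PAF$, the value $4n$ at $s=0$, and the inverse transform for the converse). One minor remark: the symmetry $\PAF_X(s)=\PAF_X(n-s)$ holds for \emph{every} real sequence by reindexing the defining sum, so that step needs no appeal to the hypothesis $x_i=x_{n-i}$.
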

\begin{corollary}\label{cor:psdtest}
If\/ $\PSD_A(s)>4n$ for any value\/ $s$ then\/ $A$ cannot be part of a Williamson sequence.
\end{corollary}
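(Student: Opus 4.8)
The plan is to argue by contraposition, using Theorem~\ref{thm:willdef} together with the elementary fact that the power spectral density is by definition a squared modulus and hence nonnegative. First I would suppose, toward a contradiction, that $A$ is part of a Williamson sequence; that is, that there exist symmetric sequences $B$, $C$, $D\in\brace{\pm1}^n$ such that $A$, $B$, $C$, $D$ are Williamson sequences. By Theorem~\ref{thm:willdef} this means the identity~$(*)$ holds for every $s\in\brace{0,\dotsc,\floor{n/2}}$.

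The key step is then a one-line inequality. Since $\PSD_X(s)=\abs[\big]{\DFT_X(s)}^2\geq0$ for every sequence $X$, each of the three summands $\PSD_B(s)$, $\PSD_C(s)$, and $\PSD_D(s)$ is nonnegative. Rearranging~$(*)$ therefore yields
\[
\PSD_A(s) = 4n - \PSD_B(s) - \PSD_C(s) - \PSD_D(s) \leq 4n
\]
for every $s\in\brace{0,\dotsc,\floor{n/2}}$.

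It then remains only to upgrade this bound from the range $\brace{0,\dotsc,\floor{n/2}}$ to an arbitrary index $s$. For this I would invoke two standard properties of the discrete Fourier transform of a real sequence: periodicity, which gives $\PSD_A(s)=\PSD_A(s\bmod n)$, and the conjugate symmetry $\DFT_A(n-s)=\overline{\DFT_A(s)}$, which upon taking moduli yields $\PSD_A(n-s)=\PSD_A(s)$. Together these show that every value of $\PSD_A$ is already attained at some index lying in $\brace{0,\dotsc,\floor{n/2}}$, so the bound $\PSD_A(s)\leq4n$ in fact holds for all $s$. This contradicts the hypothesis that $\PSD_A(s)>4n$ for some $s$, completing the argument. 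I do not expect any genuine obstacle here; the only point deserving care is this final reduction, where one should check that $n-(s\bmod n)$ lands in $\brace{0,\dotsc,\floor{n/2}}$ for both parities of $n$.
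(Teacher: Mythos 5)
Your proof is correct and uses essentially the same argument as the paper: since $\PSD$ values are nonnegative, $\PSD_A(s)>4n$ makes the identity~\eqref{eq:willpsd} from Theorem~\ref{thm:willdef} impossible. The only difference is that you explicitly carry out the reduction of an arbitrary index $s$ to the range $\brace{0,\dotsc,\floor{n/2}}$ via periodicity and the conjugate symmetry $\PSD_A(n-s)=\PSD_A(s)$ of real sequences, a detail the paper's one-line proof leaves implicit; your check is valid for both parities of $n$.
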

\begin{proof}
Since $\PSD$ values are nonnegative, if $\PSD_A(s)>4n$ then the relationship~\eqref{eq:willpsd}
cannot hold and thus $A$ cannot be part of a Williamson sequence.
\end{proof}
Similarly, one can extend this so-called PSD test in Corollary~\ref{cor:psdtest} to apply to more than one sequence
at a time.
\begin{corollary}\label{cor:psdtestext}
If\/ $\PSD_A(s)+\PSD_B(s)>4n$ for any value of\/ $s$ then\/ $A$ and\/ $B$ do not occur
together in a Williamson sequence
and if\/ $\PSD_A(s)+\PSD_B(s)+\PSD_C(s)>4n$ for any value of\/ $s$
then\/ $A$, $B$, and\/ $C$ do not occur together in a Williamson sequence.
\end{corollary}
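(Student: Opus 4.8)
The plan is to argue by contraposition, exactly paralleling the proof of Corollary~\ref{cor:psdtest}. For the first claim I would start by assuming that $A$ and $B$ do occur together in some Williamson sequence; by definition this means there exist symmetric sequences $C$, $D \in \brace{\pm1}^n$ such that $A$, $B$, $C$, $D$ form a Williamson sequence. Theorem~\ref{thm:willdef} then supplies the identity $\PSD_A(s)+\PSD_B(s)+\PSD_C(s)+\PSD_D(s)=4n$ for each $s$ in $0,\dots,\floor{n/2}$.

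The crux---such as it is---is simply to discard nonnegative terms. Because each $\PSD$ value is a squared modulus, $\PSD_C(s)\geq0$ and $\PSD_D(s)\geq0$, so the identity rearranges to $\PSD_A(s)+\PSD_B(s)=4n-\PSD_C(s)-\PSD_D(s)\leq4n$ for every such $s$. This directly negates the hypothesis, proving the first statement. The second statement follows in the same way: the hypothesis that $A$, $B$, $C$ occur together guarantees a fourth sequence $D$, and dropping the single term $\PSD_D(s)\geq0$ leaves $\PSD_A(s)+\PSD_B(s)+\PSD_C(s)\leq4n$.

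I do not anticipate a genuine obstacle; the only point requiring a word of care is the quantifier over $s$. The corollary speaks of ``any value of $s$,'' whereas Theorem~\ref{thm:willdef} only asserts the defining identity on the half-range $0,\dots,\floor{n/2}$. I would resolve this by recalling that the sequences are real, so $\DFT_A(n-s)=\overline{\DFT_A(s)}$ and hence $\PSD_A(s)=\PSD_A(n-s)$; the values of $\PSD_A$ on the half-range therefore determine all of its values, and the same holds for $B$, $C$, and $D$. Consequently the inequality obtained above for $s\in\{0,\dots,\floor{n/2}\}$ automatically extends to every $s$, so quantifying over all $s$ causes no difficulty.
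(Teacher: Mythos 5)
Your proof is correct and takes essentially the same route the paper intends: the paper gives no separate proof for Corollary~\ref{cor:psdtestext}, stating only that it extends ``similarly'' from Corollary~\ref{cor:psdtest}, whose one-line proof is exactly your argument of combining the identity~\eqref{eq:willpsd} from Theorem~\ref{thm:willdef} with the nonnegativity of $\PSD$ values. Your added remark that $\PSD_A(s)=\PSD_A(n-s)$ for real sequences, which extends the bound from $s\in\{0,\dotsc,\floor{n/2}\}$ to all $s$, is a careful point the paper leaves implicit.
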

Additionally, a quadruple of sequences $A$, $B$, $C$, $D$ that are not Williamson
must necessarily fail the PSD test for some value of $s$.
\begin{corollary}\label{cor:necpsd}
If four sequences\/ $A$, $B$, $C$, $D\in\brace{\pm1}^n$ are not Williamson sequences
then\/
$\PSD_A(s) + \PSD_B(s) + \PSD_C(s) + \PSD_D(s) > 4n$
for some\/ $s$. %\/ $s=0$, $\dots$, $\floor{n/2}$.
\end{corollary}
\begin{proof}
By Parseval's theorem the average value of $\PSD_A(s) + \PSD_B(s) + \PSD_C(s) + \PSD_D(s)$
for $s=\{0,\dotsc,n-1\}$ is $4n$.  Thus either this sum is the constant $4n$ (in which
case the sequences are Williamson) or there is some~$s$ for which this sum is larger
than~$4n$.
\end{proof}

\subsection{Compression}\label{sec:compression}

As in the work by~\cite{dokovic2015compression} we now introduce the notion of \emph{compression}.
\begin{definition}%[cf.~\cite{dokovic2015compression}]
Let\/ $A = [a_0,a_1,\ldots,a_{n-1}]$ be a sequence of length\/ $n = dm$ and set 
\[ a_j^{(d)}=a_j + a_{j+d} + \dotsb + a_{j+(m-1)d}, \qquad j=0, \dotsc, d-1. 
\label{compression} \]
Then we say that the sequence\/ 
$A^{(d)} = [a_0^{(d)},a_1^{(d)},\ldots,a_{d-1}^{(d)}]$ 
is the \emph{$m$-compression} of\/ $A$.
\end{definition}

From~\cite[Theorem~3]{dokovic2015compression} we have the following result.

\begin{theorem}\label{thm:willcomp}
If\/ $A$, $B$, $C$, $D$ are Williamson sequences of order\/ $n$ then
\[ \PAF_{A'} + \PAF_{B'} + \PAF_{C'} + \PAF_{D'} = [4n,0,\dotsc,0] \]
and
\[ \PSD_{A'} + \PSD_{B'} + \PSD_{C'} + \PSD_{D'} = [4n,\dotsc,4n] \]
for any compression\/ $A'$, $B'$, $C'$, $D'$ of that set of Williamson sequences.
\end{theorem}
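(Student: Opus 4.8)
The plan is to reduce both identities to two ``transfer'' lemmas that express the spectral and autocorrelation quantities of a compression in terms of those of the original sequence, after which both claims follow mechanically from the defining Williamson conditions. Writing $n = dm$, $A' = A^{(d)}$, and $\omega = e^{2\pi i/n}$, the central computation is the subsampling identity
\[
\DFT_{A'}(t) = \DFT_A(mt), \qquad t = 0,\dots,d-1,
\]
obtained by substituting $a_j^{(d)} = \sum_{l=0}^{m-1} a_{j+ld}$ into the length-$d$ transform (whose root of unity is $\omega^m$) and noting that $\omega^{ldmt}=1$ since $ldmt$ is a multiple of $n$. Taking squared moduli gives $\PSD_{A'}(t) = \PSD_A(mt)$, so the power spectral density of a compression is simply a subsampling of that of the original sequence.

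The PSD half is then immediate. Summing the squared-modulus identity over the four sequences yields $\sum_X \PSD_{X'}(t) = \sum_X \PSD_X(mt)$, and by Theorem~\ref{thm:willdef}---extended from $s \le \floor{n/2}$ to all $s$ via the symmetry $\PSD_X(n-s) = \PSD_X(s)$ valid for real sequences---the right-hand side equals $4n$ for every $t$. This is exactly the claimed constant sequence $[4n,\dots,4n]$.

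For the PAF half I would establish the companion aggregation identity
\[
\PAF_{A'}(s) = \sum_{p=0}^{m-1} \PAF_A(s + pd), \qquad s = 0, \dots, d-1,
\]
which says that the compressed autocorrelation collects the original autocorrelation over each residue class modulo $d$. Summing over $A,B,C,D$ and invoking Definition~\ref{def:williamsonsequences} (together with $\PAF_X(n-s)=\PAF_X(s)$, so that $\sum_X \PAF_X(s')=0$ whenever $s'\not\equiv 0 \pmod n$) annihilates every contribution except the $s=0$, $p=0$ term $\sum_X \PAF_X(0)=4n$, leaving $[4n,0,\dots,0]$. Alternatively, the PAF statement follows from the PSD statement through the Wiener--Khinchin relation $\DFT_{\PAF_A}=\PSD_A$, since $[4n,0,\dots,0]$ is precisely the inverse transform of the constant sequence $[4n,\dots,4n]$.

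The main obstacle is purely the index bookkeeping in these two transfer identities. In each case one reduces an inner sum over $l' \in \{0,\dots,m-1\}$ after a cyclic offset taken modulo $n$, and must use the $n$-periodicity of $A$ to see that this offset merely permutes the summation range rather than changing it---this is exactly where a careless argument could drop boundary terms or double-count. Once the subsampling and aggregation identities are in hand, no inequalities or case analysis remain: the Williamson conditions and the elementary symmetries of $\PAF$ and $\PSD$ force the rest.
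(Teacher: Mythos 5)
Your proof is correct, but note that the paper does not actually prove Theorem~\ref{thm:willcomp} itself: it imports the statement directly from \cite[Theorem~3]{dokovic2015compression}, so there is no in-paper argument to compare against. Measured against the cited source, your route is essentially the standard one: the two transfer identities you isolate---the subsampling identity $\DFT_{A'}(t)=\DFT_A(mt)$, hence $\PSD_{A'}(t)=\PSD_A(mt)$, and the aggregation identity $\PAF_{A'}(s)=\sum_{p=0}^{m-1}\PAF_A(s+pd)$---are exactly the lemmas on which the \cite{dokovic2015compression} proof rests. You also handle correctly the two points a blind attempt most easily fumbles: extending the Williamson conditions from $s\le\floor{n/2}$ to all residues, via $\PSD_X(n-s)=\PSD_X(s)$ for the spectral half and $\PAF_X(n-s)=\PAF_X(s)$ for the autocorrelation half (so that $\sum_X\PAF_X(s')=0$ for every $s'\not\equiv 0\pmod{n}$, and only the $s=0$, $p=0$ term survives, contributing $\sum_X\PAF_X(0)=4n$ since the entries are $\pm1$); and the index bookkeeping in the aggregation identity, where $n$-periodicity is needed to see that replacing $(j+s)\bmod d$ by $j+s$ merely permutes the inner summation over $l'$. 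Your observation that either half implies the other through the Wiener--Khinchin relation $\DFT_{\PAF_{A'}}=\PSD_{A'}$ and invertibility of the length-$d$ transform is also sound, and is a small economy over proving both identities independently.
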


\begin{corollary}\label{cor:willrowsum}
If\/ $A$, $B$, $C$, $D$ are Williamson sequences of order\/ $n$ then
\begin{equation*}
R_A^2 + R_B^2 + R_C^2 + R_D^2 = 4n %\quad\text{and}\quad R_A\equiv R_B\equiv R_C\equiv R_D\equiv n \pmod{2},
\label{eq:willdioeq}\tag{$**$}
\end{equation*}
where\/ $R_X$ denotes the rowsum of\/ $X$.
\end{corollary}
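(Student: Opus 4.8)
The plan is to invoke Theorem~\ref{thm:willcomp} in the extreme case of maximal compression, where an entire sequence collapses to a single entry. Concretely, I would apply the compression construction with $d=1$ and $m=n$, so that the requirement $n=dm$ holds trivially and each sequence $X\in\{A,B,C,D\}$ is replaced by its $n$-compression $X^{(1)}$. By the compression formula, $X^{(1)}$ is a sequence of length~$1$ whose sole entry is $x_0^{(1)}=x_0+x_1+\dots+x_{n-1}=R_X$, that is, $X^{(1)}=[R_X]$.

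The only computation required is the periodic autocorrelation of a length-$1$ sequence. For a singleton $[r]$ all index arithmetic is taken modulo~$1$, so the only available value is $\PAF_{[r]}(0)=r\cdot r=r^2$; in particular $\PAF_{X^{(1)}}(0)=R_X^2$. (Equivalently, one could use the power-spectral-density form of the theorem, since $\PSD_{[r]}(0)=\abs{r}^2=r^2$ as well.)

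I would then substitute these compressions into Theorem~\ref{thm:willcomp}. Since $A^{(1)}$, $B^{(1)}$, $C^{(1)}$, $D^{(1)}$ is a compression of the given Williamson sequence, the theorem yields
\[
\PAF_{A^{(1)}}+\PAF_{B^{(1)}}+\PAF_{C^{(1)}}+\PAF_{D^{(1)}}=[4n,0,\dots,0].
\]
But all four sequences have length~$1$, so this equality of length-$1$ sequences is just the scalar statement obtained at $s=0$, namely $R_A^2+R_B^2+R_C^2+R_D^2=4n$, which is precisely the claim.

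I expect essentially no obstacle: the corollary is an immediate specialization of Theorem~\ref{thm:willcomp}, and the only point needing a moment's care is verifying that maximal compression returns the rowsum and that the autocorrelation of a singleton is its square---both routine. As an independent sanity check one can argue directly from Definition~\ref{def:williamsonsequences}: summing $\PAF_X(s)$ over all~$s$ gives $R_X^2$, the symmetry $\PAF_X(s)=\PAF_X(n-s)$ extends the defining relation to every nonzero~$s$, and the $s=0$ term contributes $\PAF_X(0)=n$ because the entries are $\pm1$; summing over the four sequences again produces $4n$.
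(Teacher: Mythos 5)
Your proposal is correct and is essentially the paper's own proof: the paper likewise takes the $n$-compression $X'=[R_X]$ of each sequence and reads the identity off Theorem~\ref{thm:willcomp}, using the $\PSD$ form ($\PSD_{X'}=[R_X^2]$) where you primarily use the $\PAF$ form---a distinction without substance, as you yourself note parenthetically. Your closing direct verification from Definition~\ref{def:williamsonsequences} is a fine independent check but not part of the paper's argument.
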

\begin{proof}
Let $X'$ be the $n$-compression of $X\in\brace{\pm1}^n$, i.e., $X'$ is a sequence with one
entry whose value is $R_X$.  Note that $\PSD_{X'}=[R_X^2]$, so the result follows
by Theorem~\ref{thm:willcomp}. %one derives the first equation in~\eqref{eq:willdioeq}.
\end{proof}

\subsection{Williamson's product theorem}

\cite{williamson1944hadamard} proved the following theorem:

\begin{theorem}\label{thm:willprododd}
If\/ $A$, $B$, $C$, $D$ are Williamson sequences of odd order\/ $n$ then%\/
\[a_ib_ic_id_i=-a_0b_0c_0d_0 \qquad \text{for\/ $1\leq i<n/2$.} \]
\end{theorem}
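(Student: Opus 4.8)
The plan is to translate the autocorrelation condition into a single identity in the group ring $R \coloneqq \mathbb{Z}[x]/(x^n-1)$ and then recover the pointwise products $a_ib_ic_id_i$ by a modular argument. Write $A(x)=\sum_{i=0}^{n-1}a_ix^i$ and similarly for $B$, $C$, $D$. The coefficient of $x^s$ in $A(x)A(x^{-1})$ is exactly $\PAF_A(s)$, so summing over the four sequences and using that $\PAF_X(0)=n$, that $\PAF_X(s)=\PAF_X(n-s)$ for every sequence, and the defining relation $\sum_X\PAF_X(s)=0$ for $1\le s\le\lfloor n/2\rfloor$, I obtain $\sum_X X(x)X(x^{-1})=4n$ in $R$. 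Because each sequence is symmetric, $X(x^{-1})=X(x)$, so this collapses to the \emph{master identity}
\[ A(x)^2+B(x)^2+C(x)^2+D(x)^2 = 4n \quad\text{in } R. \]

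Next I would strip off the ``trivial'' part of each square. Writing $a_i=1-2\beta_i$ with $\beta_i\in\{0,1\}$ and $J\coloneqq\sum_i x^i$, one has $X(x)=J-2B_X(x)$ where $B_X=\sum_i\beta_i x^i$; using $J^2=nJ$ and $JB_X=w_XJ$ (with $w_X$ the number of $-1$ entries of $X$) gives $X(x)^2=(n-4w_X)J+4B_X(x)^2$. Substituting into the master identity and cancelling the common factor $4$ — legitimate since multiplication by $4$ is injective on the free $\mathbb{Z}$-module $R$ — yields the exact relation $\sum_X B_X(x)^2 = n-(n-W)J$, where $W=w_A+w_B+w_C+w_D$.

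The crux is then a reduction modulo $2$. Over $\mathbb{F}_2$ the Frobenius identity gives $B_X(x)^2=\sum_i\beta_i x^{2i\bmod n}$, and since $n$ is odd the map $i\mapsto 2i\bmod n$ permutes the residues, so the coefficient of $x^{2j}$ on the left is $\gamma_j\bmod 2$, where $\gamma_j$ counts the sequences having a $-1$ in position $j$. Comparing with the right-hand side $n-(n-W)J$ coefficientwise forces $\gamma_j\equiv n-W\pmod 2$ for every $j\neq0$ and $\gamma_0\equiv W\pmod2$. Since $a_jb_jc_jd_j=(-1)^{\gamma_j}$, this says $a_jb_jc_jd_j=(-1)^{\,n-W}$ takes the same value for all $j\neq0$, while $a_0b_0c_0d_0=(-1)^{W}$; their ratio is $(-1)^{\,n-2W}=(-1)^n=-1$ because $n$ is odd. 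This gives $a_ib_ic_id_i=-a_0b_0c_0d_0$ for all $i\neq0$, which in particular covers $1\le i<n/2$.

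The main obstacle I anticipate is the modular bookkeeping in the final step: a naive reduction of the master identity modulo $2$ is useless, since both sides vanish, so the real content lies in first isolating the factor $4$ to expose the ``second-order'' term $\sum_X B_X(x)^2$ and only then reducing mod $2$. Getting the coefficient comparison right — in particular tracking the permutation $i\mapsto 2i\bmod n$, which is exactly where the oddness of $n$ is used — is the delicate part; everything else is routine manipulation in $R$.
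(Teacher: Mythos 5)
Your proof is correct and complete. Note that for this particular statement there is no in-paper proof to compare against: the paper states Theorem~\ref{thm:willprododd} as a classical result of \cite{williamson1944hadamard}, and the arXiv reference \citep{willprodthm} it points to covers the even-order analogue (Theorem~\ref{thm:willprodeven}) and Corollary~\ref{cor:willprod}, not the odd case. Checking your argument on its own merits: the passage from the $\PAF$ condition to the master identity $A(x)^2+B(x)^2+C(x)^2+D(x)^2=4n$ in $\mathbb{Z}[x]/(x^n-1)$ is sound (you correctly use $\PAF_X(0)=n$, the symmetry $\PAF_X(s)=\PAF_X(n-s)$ to extend the defining relation to all $s\neq 0$, and sequence symmetry to get $X(x^{-1})=X(x)$); the substitution $X=J-2B_X$ with $J^2=nJ$ and $JB_X=w_XJ$ indeed gives $X^2=(n-4w_X)J+4B_X^2$; division by $4$ is legitimate in a torsion-free $\mathbb{Z}$-module; and the mod-$2$ Frobenius step with the bijection $i\mapsto 2i\bmod n$ (the only place oddness enters) forces $\gamma_0\equiv W$ and $\gamma_j\equiv n-W\pmod 2$ for $j\neq0$, whence $a_ib_ic_id_i\cdot a_0b_0c_0d_0=(-1)^{n-2W}=-1$. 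I verified the coefficient comparison independently and it matches, and your congruences are internally consistent with $\sum_j\gamma_j=W$ precisely because $n-1$ is even.

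Relative to Williamson's original argument, which manipulates the same master identity modulo small powers of $2$ directly on the $\pm1$ polynomials, your device of first peeling off the rank-one part $(n-4w_X)J$, cancelling the exposed factor of $4$, and only then reducing mod $2$ on the $\{0,1\}$ indicator polynomials is a tidy reorganization: it sidesteps the degenerate naive mod-$2$ reduction you correctly flag (both sides of the master identity vanish there, since $a_i+b_i+c_i+d_i$ is always even) and yields a self-contained elementary proof. One cosmetic remark: your conclusion in fact holds for all $1\le i\le n-1$; the range $1\le i<n/2$ in the statement is merely the non-redundant part under the symmetry $a_i=a_{n-i}$, as you implicitly note.
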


We prove a version of this theorem for even $n$:

\begin{theorem}\label{thm:willprodeven}
If\/ $A$, $B$, $C$, $D$ are Williamson sequences of even order\/ $n=2m$ then%\/
\[a_ib_ic_id_i=a_{i+m}b_{i+m}c_{i+m}d_{i+m} \qquad \text{for\/ $0\leq i<m$.} \]
\end{theorem}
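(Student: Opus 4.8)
The plan is to encode the defining periodic-autocorrelation conditions of a Williamson sequence as a single polynomial identity and then read off the desired relation by reducing that identity modulo~$2$. Working in the group ring $\mathbb{Z}[x]/(x^n-1)$, I would associate to each sequence $X$ the polynomial $X(x)=\sum_{k=0}^{n-1}x_k x^k$. Since $\PAF_X(s)$ is always symmetric in $s$ (i.e.\ $\PAF_X(s)=\PAF_X(n-s)$), the conditions of Definition~\ref{def:williamsonsequences} hold for every $s\neq0$, and because each sequence is symmetric one has $X(x^{-1})=X(x)$. Hence the coefficient of $x^s$ in $X(x)^2=X(x)X(x^{-1})$ is exactly $\PAF_X(s)$, and summing over the four sequences yields the clean identity
\[
A(x)^2+B(x)^2+C(x)^2+D(x)^2=4n
\]
in $\mathbb{Z}[x]/(x^n-1)$.

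Before proving anything I would restate the goal in additive mod-$2$ language. Write $x_k=1-2u^X_k$ with $u^X_k\in\{0,1\}$ the indicator of the entries equal to $-1$, and set $t_k\equiv u^A_k+u^B_k+u^C_k+u^D_k\pmod 2$. Since every entry is $\pm1$, a short check shows that $a_kb_kc_kd_k=a_{k+m}b_{k+m}c_{k+m}d_{k+m}$ is equivalent to $t_k\equiv t_{k+m}\pmod 2$. Thus the theorem is precisely the assertion that the parity sequence $(t_k)$ has period~$m$.

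The key step is to feed the substitution $X(x)=J(x)-2\tilde X(x)$, where $J(x)=\sum_{k=0}^{n-1}x^k$ and $\tilde X(x)=\sum_k u^X_k x^k$, into the displayed identity above. Using the easily verified relation $J(x)^2=nJ(x)$ and the fact that $J(x)\tilde X(x)$ contributes the weight of $X$ to every coefficient, the quadratic terms can be isolated to give an exact integer identity of the shape
\[
\tilde A(x)^2+\tilde B(x)^2+\tilde C(x)^2+\tilde D(x)^2 = n+(W-n)J(x),
\]
where $W$ is the total number of $-1$ entries among the four sequences. I would then reduce modulo~$2$ and invoke the Frobenius identity $\tilde X(x)^2\equiv\tilde X(x^2)$, so the left-hand side collapses to $T(x^2)$ with $T(x)=\sum_k t_k x^k$; crucially, because $n=2m$ is even the additive constant $n$ vanishes mod~$2$, leaving $T(x^2)\equiv(W\bmod 2)\,J(x)$ in $\mathbb{F}_2[x]/(x^{2m}-1)$.

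The argument then finishes by comparing supports. Reducing exponents mod~$2m$, the polynomial $T(x^2)$ is supported on even powers only, with the coefficient of $x^{2k}$ equal to $t_k+t_{k+m}$ for $0\le k<m$; by contrast $J(x)$ has every coefficient equal to~$1$. Matching the odd-degree coefficients forces $W\equiv0\pmod 2$ (this can alternatively be read off from the rowsum identity of Corollary~\ref{cor:willrowsum}), so the right-hand side is~$0$; matching the even-degree coefficients then gives $t_k+t_{k+m}\equiv0\pmod 2$ for all~$k$, which is the claim. The main obstacle---and the one genuinely new idea---is the passage to the $\{0,1\}$ indicators followed by Frobenius squaring over $\mathbb{F}_2$: this is what converts the quadratic Williamson relation into a statement coupling index~$k$ with index~$k+m$. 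It is also exactly here that evenness of~$n$ is essential, both to kill the constant~$n$ modulo~$2$ and because the frequency-doubling map $k\mapsto 2k\bmod 2m$ collapses $k$ and $k+m$ together; for odd~$n$ this map is a bijection, which is why the odd case of Theorem~\ref{thm:willprododd} takes a different form.
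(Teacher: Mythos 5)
Your proof is correct, and every step checks out: the extension of the defining equations to all $s\neq 0$ via $\PAF_X(s)=\PAF_X(n-s)$, the identity $X(x^{-1})=X(x)$ from symmetry (so the coefficient of $x^s$ in $X(x)^2$ is $\PAF_X(s)$ in $\mathbb{Z}[x]/(x^n-1)$), the exact identity $\tilde A(x)^2+\tilde B(x)^2+\tilde C(x)^2+\tilde D(x)^2=n+(W-n)J(x)$ (using $J(x)^2=nJ(x)$ and $J(x)\tilde X(x)=\mathrm{wt}(X)J(x)$), and the mod-$2$ endgame. In $\mathbb{F}_2[x]/(x^{2m}-1)$ the reduced representative of $T(x^2)$ is supported on even exponents with coefficient $t_k+t_{k+m}$ at $x^{2k}$, while $J(x)$ has full support; since $m\geq 1$ an odd exponent exists, so $W\equiv 0\pmod 2$ and then $t_k\equiv t_{k+m}\pmod 2$, which is the claim because $a_kb_kc_kd_k=(-1)^{t_k}$. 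One caveat on the comparison you asked for: the paper itself contains no proof of Theorem~\ref{thm:willprodeven}---it defers to an arXiv note---so your argument can only be judged on its own merits, and on those it stands as a complete, self-contained proof by the natural route (group-ring identity, passage to $\{0,1\}$ indicators, Frobenius squaring over $\mathbb{F}_2$), which is the same style of argument the authors use elsewhere for such product theorems.

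Two features of your write-up are worth keeping. First, your identity uniformly recovers the odd case: for odd $n$ the map $k\mapsto 2k\bmod n$ is a bijection, and comparing coefficients gives $t_0\equiv W$ and $t_k\equiv W+1\pmod 2$ for $k\neq 0$, which is exactly Theorem~\ref{thm:willprododd}; your closing remark correctly identifies the collapse $k,k+m\mapsto 2k\bmod 2m$ as the structural reason the even case couples $i$ with $i+m$. Second, Corollary~\ref{cor:willprod} drops out immediately, since $a_i'+b_i'+c_i'+d_i'\equiv 2\cdot\#\{X\in\{A,B,C,D\}: x_ix_{i+m}=1\}\pmod 4$, and your conclusion makes that count even. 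Your aside that $W\equiv 0\pmod 2$ can instead be read off Corollary~\ref{cor:willrowsum} is also sound (each rowsum is even for even $n$, and $\sum_X R_X^2=8m$ forces the half-rowsums to have even sum), though the support argument already delivers it, so no extra input is needed.
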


Although this theorem is not an essential part of our algorithm it improves its
efficiency by allowing us to cut down the size of the search space.
Our algorithm uses the theorem in the following form:

\begin{corollary}\label{cor:willprod}
If\/ $A'$, $B'$, $C'$, $D'$ are the\/ $2$\nobreakdash-compressions of a set of Williamson sequences
then\/ $A'+B'+C'+D'\equiv[0,\dotsc,0]\pmod{4}$.
\end{corollary}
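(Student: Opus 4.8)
The plan is to unfold the definition of the 2-compression and reduce the desired congruence to a statement about the product of eight $\pm1$ entries, which will then follow directly from Theorem~\ref{thm:willprodeven}. Writing $n=2m$ for the order of the underlying Williamson sequence, a 2-compression sums two terms, so it has length $d=n/2=m$ and its entries are $a'_j=a_j+a_{j+m}$ for $0\leq j<m$ (and analogously for $B'$, $C'$, $D'$). Consequently the $j$-th entry of $A'+B'+C'+D'$ is the sum of the eight values $a_j,a_{j+m},b_j,b_{j+m},c_j,c_{j+m},d_j,d_{j+m}$, each of which lies in $\{\pm1\}$.

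The first step I would carry out is the elementary observation that a sum of eight $\pm1$'s is divisible by $4$ exactly when an even number of them equal $-1$: if $k$ of the eight are $+1$, then the sum equals $2k-8$, which is a multiple of $4$ precisely when $k$ is even. Equivalently, the congruence to $0$ modulo $4$ holds if and only if the product of the eight entries equals $+1$. It therefore suffices to verify, for each $j$ with $0\leq j<m$, that
\[
(a_j b_j c_j d_j)(a_{j+m} b_{j+m} c_{j+m} d_{j+m}) = 1 .
\]

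This is where Theorem~\ref{thm:willprodeven} does all the work: it asserts precisely that $a_j b_j c_j d_j = a_{j+m} b_{j+m} c_{j+m} d_{j+m}$ for $0\leq j<m$, so if I set $P=a_j b_j c_j d_j\in\{\pm1\}$ then the displayed product equals $P^2=1$. This establishes the claim for every index $j$, yielding $A'+B'+C'+D'\equiv[0,\dotsc,0]\pmod{4}$.

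Since the deduction collapses to a single line once the product theorem is invoked, I do not anticipate any genuine obstacle; the only points requiring care are the bookkeeping that converts ``sum $\equiv0\pmod 4$'' into the parity-of-product condition, and making sure the index range $0\leq j<m$ of the compression lines up exactly with the range in Theorem~\ref{thm:willprodeven}.
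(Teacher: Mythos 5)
Your proof is correct and follows the route the paper intends: Corollary~\ref{cor:willprod} is presented there as a direct consequence of Theorem~\ref{thm:willprodeven} (with the written proof deferred to the arXiv note \citep{willprodthm}), and your reduction of each entry of $A'+B'+C'+D'$---a sum of eight $\pm1$ values---to the condition $(a_jb_jc_jd_j)(a_{j+m}b_{j+m}c_{j+m}d_{j+m})=1$, which the theorem makes a square and hence equal to $1$, is exactly the standard deduction. Your bookkeeping also checks out: the $2$-compression has length $d=n/2=m$ with entries $a'_j=a_j+a_{j+m}$ for $0\leq j<m$, which lines up precisely with the index range in Theorem~\ref{thm:willprodeven}, and the parity argument ($2k-8\equiv0\pmod 4$ iff $k$ is even iff the product of the eight entries is $+1$) is sound.
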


Proofs of Theorem~\ref{thm:willprodeven} and Corollary~\ref{cor:willprod}
are available in the appendix.

\subsection{Doubling construction}\label{sec:dblconstr}

%Using complex Hadamard matrices \cite{turyn1970} showed the existence
%of a Williamson sequence of odd order~$n$ implies the existence of Williamson
%sequences of orders $2^kn$ for $k=1$, $2$, $3$, $4$.
We now give a simple construction that generates Williamson sequences of
order~$2n$ from Williamson sequences of odd order $n$ using the following
three operations on sequences $A=[a_0,\dotsc,a_{n-1}]$ and $B=[b_0,\dotsc,b_{n-1}]$:
\begin{enumerate}
\item Negation.  Individually negate each entry of $A$, i.e., $-A\coloneqq[-a_0,\dotsc,-a_{n-1}]$.
\item Shift.  Cyclically shift the entries of $A$ by an offset of $(n-1)/2$,
i.e., %$(a_k,a_{k+1},\dotsc,a_{k-1})$ with indices taken modulo $n$.
\[ A' \coloneqq [a_{(n-1)/2},\dotsc,a_{n-1},a_0,a_1,\dotsc,a_{(n-3)/2}] . \]
\item Interleave.  Interleave the entries of $A$ and $B$ in a perfect shuffle, i.e.,
\[ A\shuffle B \coloneqq [a_0,b_0,a_1,b_1,\dotsc,a_{n-1},b_{n-1}] . \]
\end{enumerate}
Our doubling construction is captured by the following theorem.
\begin{theorem}\label{thm:willdbl}
Let $A$, $B$, $C$, $D$ be Williamson sequences of odd order $n$.
Then
{\rm\[ A\shuffle B',\, (-A)\shuffle B',\, C\shuffle D',\, (-C)\shuffle D' \]}%
are Williamson sequences of order $2n$.
\end{theorem}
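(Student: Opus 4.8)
The plan is to verify that the four candidate sequences satisfy the defining conditions of Williamson sequences of order $2n$: each must lie in $\brace{\pm1}^{2n}$, each must be symmetric, and the sum of their four $\PAF$ sequences must vanish in positions $1,\dots,\floor{2n/2}=n$. The $\pm1$ condition is immediate since negation, shifting, and interleaving all preserve entries in $\brace{\pm1}$. I would therefore spend the bulk of the argument on the symmetry condition and the $\PAF$ condition, and I expect the latter to be the main obstacle.

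The key tool I would develop first is a formula for the $\PAF$ of an interleaved sequence $X\shuffle Y$ of length $2n$ in terms of $\PAF_X$, $\PAF_Y$, and a cross-correlation term. Splitting the index $s$ of $\PAF_{X\shuffle Y}(s)$ into its even and odd parts, the even-indexed values $\PAF_{X\shuffle Y}(2t)$ should decouple into $\PAF_X(t)+\PAF_Y(t)$, since shifting by an even offset keeps even positions aligned with even positions (the $X$-entries) and odd with odd (the $Y$-entries). The odd-indexed values $\PAF_{X\shuffle Y}(2t+1)$ should instead produce a sum of two cross-correlations between $X$ and the shifted $Y$. First I would establish these two formulas carefully, tracking the $\bmod\,2n$ reductions. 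The crucial point for the even positions is that when we form the sum over all four candidate sequences, the $X$-contributions are $\PAF_A+\PAF_{-A}+\PAF_C+\PAF_{-C}=2(\PAF_A+\PAF_C)$ and the $Y$-contributions are $2(\PAF_{B'}+\PAF_{D'})$; since $\PAF$ is invariant under both negation and cyclic shift, this equals $2(\PAF_A+\PAF_B+\PAF_C+\PAF_D)$, which vanishes at $t=1,\dots,\floor{n/2}$ by hypothesis, and at $t=0$ gives $2\cdot 4n=8n=4\cdot(2n)$, exactly the required value for the order-$2n$ $\PAF$ condition.

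The harder part is the odd-indexed positions, where the cross-correlation terms appear and where the interplay between negating $A$ and shifting the second component (the primes on $B',D'$) must be exploited. The reason the construction pairs $A$ with $-A$ against the same $B'$, rather than using four unrelated sequences, is precisely so that the cross terms cancel: the cross-correlation of $A$ with $B'$ and the cross-correlation of $-A$ with $B'$ are negatives of each other, so their contributions to $\PAF_{A\shuffle B'}(2t+1)+\PAF_{(-A)\shuffle B'}(2t+1)$ annihilate, and likewise for the $C,D$ pair. I would make this cancellation explicit and confirm it holds for every odd index, which simultaneously forces the total odd-indexed sum to zero. Finally I would verify the symmetry requirement $z_i=z_{2n-i}$ for each candidate: here the shift by $(n-1)/2$ applied to $B$ and $D$ is exactly calibrated so that the interleaved sequence is symmetric, which I would check by translating the symmetry of $A,B,C,D$ through the interleave-and-shift indexing. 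I expect the bookkeeping of the shift offset interacting with the perfect-shuffle index map to be the most error-prone step, so I would state the index correspondence $(X\shuffle Y)_{2k}=x_k$, $(X\shuffle Y)_{2k+1}=y_k$ once at the outset and reuse it throughout.
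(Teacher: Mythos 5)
Your PAF analysis is exactly the right route, and it matches the machinery the paper relies on: the paper defers the full proof of Theorem~\ref{thm:willdbl} to the cited arXiv note, but the identities you propose---$\PAF_{X\shuffle Y}(2t)=\PAF_X(t)+\PAF_Y(t)$, the invariance of $\PAF$ under negation and cyclic shift, and the cancellation of the odd-lag cross-correlations within the pair $A\shuffle B'$, $(-A)\shuffle B'$ (and likewise the $C$, $D$ pair)---are precisely the identities the paper itself uses in proving the companion Theorem~\ref{thm:8will}, which is this construction run in reverse. Your lag accounting is also correct: for order $2n$ one needs lags $1,\dotsc,n$; the even lags $2t$ with $1\le t\le\floor{n/2}$ are covered exactly by the hypothesis on $A,B,C,D$, and every odd lag (including the top lag $n$, which is odd since $n$ is odd) vanishes by the cross-correlation cancellation---which, as your argument correctly shows, holds for \emph{any} shift of $B$ and $D$ whatsoever.

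That last observation is why the one step you deferred---the symmetry check---is where all the content of the shift offset lives, and your assertion that the offset $(n-1)/2$ is ``exactly calibrated'' does not survive the bookkeeping under the shift operation as displayed in Section~\ref{sec:dblconstr}. Writing $(X\shuffle W)_{2k}=x_k$ and $(X\shuffle W)_{2k+1}=w_k$, symmetry of the length-$2n$ interleave forces $w_k=w_{n-1-k}$ for all $k$ (the even positions are fine for any symmetric $X$, with no shift needed). If $w_k=b_{(k+c)\bmod n}$ with $B$ symmetric, then $w_{n-1-k}=b_{(k+1-c)\bmod n}$, so symmetry for a generic symmetric $B$ holds iff $2c\equiv1\pmod n$, i.e., $c=(n+1)/2$, not $c=(n-1)/2$: the leading entry of the shifted sequence must be $b_{(n+1)/2}$, which is the displayed shift taken in the opposite direction. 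Concretely, for $n=3$ with $A=[+,+,+]$ and $B=[+,-,-]$, the shift as displayed gives $B'=[-,-,+]$ and $A\shuffle B'=[+,-,+,-,+,+]$, whose entries $1$ and $5$ differ, so it is not symmetric; taking $c=(n+1)/2=2$ instead gives $B'=[-,+,-]$ and the symmetric $[+,-,+,+,+,-]$, and all four candidate sequences then verify. So to complete your proof you must either read the shift as $b'_i=b_{(i-(n-1)/2)\bmod n}$ (entries moved forward by $(n-1)/2$, rather than the list rotated left by $(n-1)/2$) or equivalently replace the offset by $(n+1)/2$; as literally asserted, your symmetry step would fail. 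Everything else in the proposal is sound.
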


We remark that a single set of Williamson sequences of order $n$
can often be used to generate more
than one set of Williamson sequences of order~$2n$ by applying equivalence operations
to the quadruple $A$, $B$, $C$, $D$ before using the construction.
For example, the single inequivalent set of Williamson sequences of order 5
can be used to generate both inequivalent sets of Williamson sequences of order 10
using this construction with an appropriate reordering of $A$, $B$, $C$, $D$. %the equivalence operation E1
%(reorder $A$, $B$, $C$, $D$).

We also remark that this doubling construction can be reversed in the sense that
if Williamson sequences of order $2n$ exist for $n$ odd then symmetric sequences
$X_1$, $\dotsc$, $X_8\in\brace{\pm1}^n$ %with $\pm1$ entries of order $n$
can be constructed that satisfy the Williamson
property
\[ \sum_{i=1}^8 \PAF_{X_i}(s) = 0 \qquad\text{for $s=1$, $\dotsc$, $n-1$} . \]
We call such sequences \emph{8-Williamson sequences} because they form the
first rows of \emph{8-Williamson matrices} as defined by for example
\cite{kotsireas2006constructions}.  Note that the equivalence operations
of Section~\ref{sec:willequiv} also define an equivalence on 8-Williamson
sequences so long as they are written to apply to~8 sequences instead of~4.

\begin{theorem}\label{thm:8will}
Let $A$, $B$, $C$, $D$ be Williamson sequences of order $2n$
with $n$ odd and write
{\rm\[ A = A_1\shuffle A_2',\, B = B_1\shuffle B_2',\, C = C_1\shuffle C_2',\, D = D_1\shuffle D_2' . \]}%
Then $A_1$, $A_2$, $B_1$, $B_2$, $C_1$, $C_2$, $D_1$, $D_2$ are 8-Williamson sequences of order $n$.
\end{theorem}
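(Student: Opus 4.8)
The plan is to realize this statement as the exact inverse of the doubling construction (Theorem~\ref{thm:willdbl}) and to prove the two requirements of an 8-Williamson sequence---the autocorrelation identity and the symmetry of each component---separately. Throughout I write $A=[a_0,\dotsc,a_{2n-1}]$ so that $A_1$ is the subsequence of even-indexed entries, $(A_1)_k=a_{2k}$, and $A_2'$ is the subsequence of odd-indexed entries, $(A_2')_k=a_{2k+1}$, with $A_2$ the sequence recovered from $A_2'$ by undoing the cyclic shift of the construction; likewise for $B$, $C$, $D$.

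The first and main computational step is an interleaving identity for the periodic autocorrelation. For arbitrary $X,Y\in\brace{\pm1}^n$ and $E=X\shuffle Y$ of order $2n$, I claim that $\PAF_E(2t)=\PAF_X(t)+\PAF_Y(t)$ for every $t$. This follows by splitting the defining sum $\sum_{j=0}^{2n-1}e_je_{(j+2t)\bmod 2n}$ according to the parity of $j$: the even indices $j=2k$ contribute $\sum_k x_kx_{(k+t)\bmod n}=\PAF_X(t)$ and the odd indices $j=2k+1$ contribute $\sum_k y_ky_{(k+t)\bmod n}=\PAF_Y(t)$, the key point being that an even shift $2t$ keeps each parity class invariant and reduces the index arithmetic modulo $2n$ to arithmetic modulo $n$. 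I would record alongside this the shift-invariance of $\PAF$, so that $\PAF_{A_2'}(t)=\PAF_{A_2}(t)$ and the cyclic shift relating $A_2'$ to $A_2$ has no effect on autocorrelations.

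Summing the interleaving identity over $E\in\brace{A,B,C,D}$ and using shift-invariance to replace each primed odd component by its unshifted version gives $\sum_{E}\PAF_E(2t)=\sum_{i=1}^{8}\PAF_{X_i}(t)$, where $X_1,\dotsc,X_8$ are $A_1,A_2,\dotsc,D_1,D_2$. Because $A,B,C,D$ are Williamson of order $2n$, the left-hand side vanishes for every nonzero $s$ in $\brace{1,\dotsc,2n-1}$ (the defining range $s=1,\dotsc,n$ extends to all $s$ via the even symmetry $\PAF_E(s)=\PAF_E(2n-s)$). Specializing to $s=2t$ with $t=1,\dotsc,n-1$ then yields $\sum_{i=1}^{8}\PAF_{X_i}(t)=0$, which is precisely the 8-Williamson autocorrelation condition. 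It is worth noting that only the even-indexed conditions of the order-$2n$ sequence are consumed here; the odd-indexed conditions play no role in the autocorrelation argument.

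It remains to verify that each of the eight components is a symmetric $\brace{\pm1}$ sequence, and this is where the shift---and the hypothesis that $n$ is odd---enters. The even parts are symmetric for free: from $a_{2k}=a_{(2n-2k)\bmod 2n}$ one reads off $(A_1)_k=(A_1)_{(n-k)\bmod n}$, and similarly for $B_1$, $C_1$, $D_1$. The odd parts are not symmetric on the nose; instead the symmetry $a_i=a_{(2n-i)\bmod 2n}$ forces the reflected relation $(A_2')_k=(A_2')_{n-1-k}$, i.e.\ symmetry about the half-integer point $(n-1)/2$ rather than about $0$. The remaining work is to check that the cyclic shift of magnitude $(n-1)/2$ from the doubling construction carries this reflected symmetry to the standard symmetry $(A_2)_k=(A_2)_{(n-k)\bmod n}$; concretely the required offset $c$ satisfies $2c\equiv\pm1\pmod n$, which is solvable exactly because $n$ is odd and $2$ is invertible modulo $n$. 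I expect this last bookkeeping---pinning down the correct direction and magnitude of the shift so that the reflected symmetry becomes genuine symmetry, in agreement with the conventions of Theorem~\ref{thm:willdbl}---to be the only delicate part of the argument; the autocorrelation identity above is routine once the even/odd index split is set up.
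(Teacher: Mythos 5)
Your proposal is correct and takes essentially the same route as the paper's proof, which likewise derives the autocorrelation condition from the identities $\PAF_{X\shuffle Y}(2s)=\PAF_X(s)+\PAF_Y(s)$ and $\PAF_{Y'}(s)=\PAF_Y(s)$ applied to the even-shift Williamson conditions of the order-$2n$ sequences. If anything you are more explicit than the paper, which states the interleaving identity without proof and disposes of symmetry with the single remark that it ``follows directly from the construction,'' whereas you prove the identity by the even/odd index split, handle the extension of the defining range $s=1,\dotsc,n$ to even shifts $2t$ via the symmetry of $\PAF$, and correctly isolate the one delicate point---that the shift offset, invertible modulo odd $n$ since $2c\equiv\pm1\pmod{n}$ is solvable, converts the reflected symmetry of the odd-indexed subsequence into genuine symmetry.
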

%\begin{proof}
%The fact that the constructed sequences are symmetric, have $\pm1$ entries,
%and are of order $n$ follows directly from the construction and because
%the sequences they are constructed from are symmetric,
%have $\pm1$ entries, and are of order $2n$.  Since $A$, $B$, $C$, $D$ are
%Williamson we have that
%\[ \sum_{X=A,B,C,D}\PAF_X(2s) = 0 \qquad\text{for $s=1$, $\dotsc$, $n-1$} . \]
%Using the fact that $\PAF_{X\shuffle Y}(2s)=\PAF_X(s)+\PAF_Y(s)$
%and $\PAF_{Y'}(s)=\PAF_Y(s)$ this sum becomes exactly the Williamson property
%$\sum_{i=1}^8 \PAF_{X_i}(s) = 0$.
%\end{proof}

%Since we construct Williamson sequences in all orders $2n$ with $n\leq35$
%this theorem implies that 8-Williamson sequences exist in all odd orders $n\leq35$.
%The best known result prior to our work seems to be that they exist for all odd orders
%$n\leq29$ by \cite{kotsireas2009hadamard}.

Proofs of Theorems~\ref{thm:willdbl} and~\ref{thm:8will} are available
in the appendix.  Similar constructions have been described for
complementary sets of sequences by \cite{tseng1972complementary}
but to our knowledge these constructions are new in the context
of Williamson sequences.

\section{The SAT+CAS paradigm}\label{sec:sat+cas}

The idea of combining SAT solvers with computer algebra systems
originated independently in two works published in 2015:
In a paper at the conference CADE entitled ``\textsc{MathCheck}: A Math Assistant
via a Combination of Computer Algebra Systems and SAT Solvers'' by~\cite{DBLP:conf/cade/ZulkoskiGC15}
and in an invited talk at the conference ISSAC entitled
``Building Bridges between Symbolic Computation and Satisfiability Checking''
by~\cite{DBLP:conf/issac/Abraham15}.
The paradigm was also anticipated by \cite{jovanovic2012solving}
who used CAS techniques in SAT-like 
search algorithms. %conflict-driven clause learning search algorithm.

The CADE paper describes a tool called \MC\ that combines
the general-purpose search capability of SAT solvers with the domain-specific
knowledge of computer algebra systems.
The paper made the case that \MC\
\begin{quote}
\dots combines the efficient search routines of modern
SAT solvers, with the expressive power of CAS, thus complementing
both.
\end{quote}
As evidence for the power of this paradigm, they used
\MC\ to improve the best known bounds in two conjectures in graph theory.

Independently, the computer scientist Erika \'Abrah\'am observed
that the fields of satisfiability checking and symbolic computation share
many common aims but in practice are quite separated, with little communication
between the fields:
\begin{quote}
\ldots collaboration between symbolic computation and SMT [SAT modulo theories]
solving is still (surprisingly) quite restricted\ldots
\end{quote}
Furthermore, she outlined reasons why combining the insights from both fields
had the potential to solve certain problems more efficiently than would be
otherwise possible.  To this end, the SC$^2$ project~\citep{sc2} was started
with the aim of fostering collaboration between the two communities.

\subsection{Programmatic SAT and SMT}

Not all constraints can easily be expressed in SAT instances.
To deal with this, sophisticated \emph{SMT solvers} were developed that can determine
the satisfiability of formulas in first-order logic
with respect to certain logical theories~\citep{barrett2009satisfiability}.
SMT solvers are often based on the
Davis--Putnam--Logemann--Loveland algorithm (modulo theories) by \cite{ganzinger2004dpll}
denoted DPLL($T$) where~$T$ is a theory of first-order logic.
In this framework SAT solvers handle the Boolean aspects of a formula
while~$T$ solvers handle the theory aspects of a formula.
\cite{DBLP:conf/issac/Abraham15} gives a number of requirements that a theory
solver should satisfy, and the SMT-LIB standard by \cite{BarFT-SMTLIB}
describes many different theories used by SMT solvers along with a common
input and output language for those theories.

Programmatic SAT solvers were introduced by \cite{DBLP:conf/sat/GaneshOSDRS12}
and are simply a variant of DPLL($T$)
where the~$T$ solvers are replaced by arbitrary user-specified code.
%They have been described as a ``poor man's SMT''
%because they work in a similar fashion as SMT solvers but
%don't accept problems specified in the language of a theory~\citep{bright2019good}.
However, programmatic SAT solvers only deal with Boolean logic and
not first-order logic, similar to the
``SAT modulo SAT'' solvers by \cite{bayless2013efficient}.
Additionally, the code provided to a programmatic SAT solver
can be specialized to individual formulas, while the $T$ solvers in
DPLL($T$) deal with a specific theory of first-order logic
that is typically fixed in advance.  For our purposes in this paper we will use a CAS
as a theory solver and therefore
our usage of programmatic SAT can be considered an instance of DPLL(CAS).

A programmatic SAT solver can generate conflict clauses programmatically, i.e.,
by a piece of code that runs as the SAT solver carries out its search.
Such a SAT solver can learn clauses that are more useful than the conflict
clauses that it learns by default.  Not only can this make the SAT solver's search more efficient,
it allows for increased expressiveness as many types of constraints that are 
awkward to express in a conjunctive normal form format can naturally be expressed
using code.  Additionally, it allows one to compile \emph{instance-specific} SAT solvers
which are tailored to solving one specific type of instance.  In this framework
instances no longer have to solely consist of a set of clauses in conjunctive normal form.
Instead, instances can consist of both a set of CNF clauses and a piece of code that
encodes constraints that are too cumbersome to be written in CNF format.

As an example of this, consider the case of searching for Williamson sequences using 
a SAT solver.  One could encode Definition~\ref{def:williamsonsequences} in CNF
format by using Boolean variables to represent the entries in the Williamson sequences and by
using binary adders to encode the summations; such a method was used by~\cite{DBLP:conf/casc/BrightGHKNC16}.
However, one could also use the equivalent definition given in Theorem~\ref{thm:willdef}.
This alternate definition has the advantage that it becomes easy to apply
Corollaries~\ref{cor:psdtest} and~\ref{cor:psdtestext}, which allows one to filter many
sequences from consideration and greatly speed up the search.
Because of this, our method will use the constraints~\eqref{eq:willpsd}
from Theorem~\ref{thm:willdef}
to encode the definition of Williamson sequences in our SAT instances.

However, encoding the equations in~\eqref{eq:willpsd} would be extremely cumbersome
to do using CNF clauses, because of the involved nature of computing the PSD values.
However, the equations~\eqref{eq:willpsd} are easy to express programmatically---%
as long as one has a method of computing the PSD values.  This can be done
efficiently using the fast Fourier transform which is available in many
computer algebra systems and mathematical libraries.

Thus, our SAT instances will not use CNF clauses to encode the defining property of
Williamson sequences but instead encode those clauses programmatically.
This is done by writing a \emph{callback function}
that is compiled with the SAT solver and programmatically expresses the constraints
in Theorem~\ref{thm:willdef} and
the filtering criteria of Corollaries~\ref{cor:psdtest}
and~\ref{cor:psdtestext}.

\subsection{Programmatic Williamson encoding}\label{sec:progwill}

We now describe in detail our programmatic encoding of Williamson sequences.
The encoding takes the form of a piece of code which examines a partial assignment
to the Boolean variables defining the sequences $A$, $B$, $C$, and~$D$ (where true
encodes~$1$ and false encodes~$-1$).
In the case when the partial assignment can be ruled out using
Corollaries~\ref{cor:psdtest} or~\ref{cor:psdtestext}, a conflict
clause is returned which encodes a reason why the partial assignment
no longer needs to be considered.  If the sequences actually form a Williamson
sequence then they are recorded in an auxiliary file; at this point the
solver would traditionally return SAT and stop, though our implementation continues the search
because we 
want to do a
complete enumeration of the space.

The programmatic callback function does the following:

\begin{enumerate}
\item Initialize $S\coloneqq\emptyset$.  This variable will be a set which contains
the sequences whose entries are all currently assigned.
\item Check if all the variables encoding the entries in sequence~$A$ have been assigned;
if so, add $A$ to the set~$S$ and compute $\PSD_A$, otherwise skip to the next step.
When $\PSD_A(s)>4n$
for some value of $s$ then learn a clause prohibiting the entries of $A$ from being
assigned the way they currently are, i.e., learn the clause
\begin{equation*}
%\lnot(a_0^{\text{cur}}\land a_1^{\text{cur}} \land\dotsb\land a_{\floor{n/2}}^{\text{cur}}) \equiv \lnot a_0^{\text{cur}}\lor\lnot a_1^{\text{cur}}\lor\dotsb\lor\lnot a_{\floor{n/2}}^{\text{cur}}
\lnot(a_0^{\text{cur}}\land a_1^{\text{cur}} \land\dotsb\land a_{n-1}^{\text{cur}}) \equiv \lnot a_0^{\text{cur}}\lor\lnot a_1^{\text{cur}}\lor\dotsb\lor\lnot a_{n-1}^{\text{cur}}
%\label{eq:confclause}
\end{equation*}
where $a_i^{\text{cur}}$ is the literal $a_i$ when $a_i$ is currently assigned to true
and is the literal $\lnot a_i$ when $a_i$ is currently assigned to false.
\item Check if all the variables encoding the entries in sequence~$B$ have been assigned;
if so, add $B$ to the set~$S$ and compute $\PSD_B$.  When there is some $s$ such that $\sum_{X\in S}\PSD_X(s)>4n$ then learn a clause
prohibiting the values of the sequences in~$S$ from being assigned the way they currently are.
\item Repeat the last step again twice, once with $B$ replaced with $C$ and then again with~$B$ replaced with $D$. 
\item If all the variables in sequences $A$, $B$, $C$, and $D$ are assigned then
record the sequences in an auxiliary file
and learn a clause prohibiting the values of the sequences from being
assigned the way they currently are so that this assignment is not examined again.
\end{enumerate}

After the search is completed the auxiliary file will contain
all sequences that passed the PSD tests and thus all Williamson sequences
will be in this list.  Verifying a sequence is in fact Williamson can be done using
Definition~\ref{def:williamsonsequences}.  (If the PSDs were computed exactly
this is not necessary by Corollary~\ref{cor:necpsd} but with
floating-point arithmetic there is a slight chance that
a PSD test failure was not detected.)
Note that the clauses learned by this function allow the SAT solver to 
execute the search significantly faster than would be
possible using a brute-force technique.  As a rough estimate
of the benefit, note that there are approximately
$2^{n/2}$ possibilities for each member $A$, $B$, $C$, $D$ in a set of Williamson
sequences.  If no clauses are learned in steps~2--4 then the SAT solver
will examine all $2^{4(n/2)}$ total possibilities.
Conversely, if a clause is always learned in step~2 then the SAT solver
will only need to examine the $2^{n/2}$ possibilities for~$A$.
Of course, one will not always learn a clause in steps~2\nobreakdash--4
but in practice such a clause is learned quite frequently
and this more than makes up for the overhead of
computing the PSD values (this accounted for about 20\%
of the SAT solver's runtime in our experiments).
The programmatic approach
was essential for the largest orders that we were able to solve;
see Table~\ref{tbl:satresults} in Section~\ref{sec:results}
for a comparison between the running times
of a SAT solver using the CNF and programmatic encodings.
However, the programmatic encoding was much too slow to be able perform the enumeration
by itself and a more sophisticated enumeration algorithm
was required (using the programmatic encoding as a subroutine).

\section{Our enumeration algorithm}\label{sec:sat+casmethod}

A high-level summary of the components of our enumeration algorithm
is given in Figure~\ref{fig:diagram}.  We require two kinds of
functions from computer algebra systems or mathematical libraries,
namely, one that can solve the quadratic Diophantine equation~\eqref{eq:willdioeq}
%$u^2+v^2+x^2+y^2=4n$ in integers
and one that can compute the discrete Fourier transform of a
sequence. %of length~$n$.

In the following description we have step~1 handled by the
Diophantine equation solver, steps~2--4 handled by
the driver script, and step~5 handled by the programmatic
SAT solver.  The driver script is responsible for constructing
the SAT instances and passing them off to the programmatic SAT solver.
It also implicitly passes encoding information to the system responsible
for performing the programmatic Williamson encoding described in
Section~\ref{sec:progwill}, i.e., the system needs to know
which variables in the SAT instance correspond to which
Williamson sequence entries, but this is fixed in advance.

We now give a complete description of our method that enumerates
all Williamson sequences of a given order~$n$ divisible by~$2$ or~$3$.
Let $m\in\brace{2,3}$ be the smallest prime divisor of $n$.

\begin{figure}\centering
\begin{tikzpicture}
\tikzset{>=latex'}
\tikzset{auto}
\tikzstyle{block} = [draw, rectangle, minimum height=3em, minimum width=6.975em, text width=6.975em, align=center]
\node (input) {\small $n\mspace{1mu}$};
\node [block,right of=input,node distance=2cm] (gen) {\small Driver script};
\node [block,right of=gen,node distance=5.5cm] (sat) {\small Programmatic\\SAT solver};
\node [block,above of=gen,node distance=2.75cm] (cas) {\small Diophantine solver\\Fourier transform};%\\Vector matching\\};
\node [block,above of=sat,node distance=2.75cm] (cas2) {\small Fourier transform};
%\node [below left of=gen,node distance=2.5cm] (satout) {Williamson sequences};
%\node [below right of=gen,node distance=2.5cm] (unsat) {UNSAT};
\draw [draw,->] (input) -- (gen);
\draw [draw,->] (sat.140) -- node[text width=1.375cm,align=center] {\small Partial\\assignment\\} (cas2.220);
\draw [draw,<-] (sat.40) -- node[right,text width=1cm,align=center] {\small Conflict\\clause\\} (cas2.320);
\draw [draw,->] (gen.140) -- node[text width=1.125cm,align=center] {\small External\\call\\} (cas.220);
\draw [draw,<-] (gen.40) -- node[right,text width=0.75cm,align=center] {\small Result} (cas.320);
%\draw [draw,->] (gen.350) -- node {\small SAT instances} (sat.190);
\draw [draw,->] (gen) -- node[below] {\small SAT instances} (sat);
%\draw [draw,->] (sat.190) -- node[text width=4cm,align=center] {\small SAT solver results} (gen.350);
%\draw [draw,<->] (gen) -- node[text width=3cm,align=center] {\small Domain-specific\\information\\} (cas);
%\draw [draw,->] (gen) -- (unsat);
%\draw [draw,->] (gen) -- (satout);
\draw [draw,->,dashed] (gen.12.5) -| (4.25,2.75) --node[text width=4cm,align=center]{\small Encoding\\information\\} (cas2);
\node [right of=sat,node distance=2.75cm,text width=4.5em,align=center] (output) {\small Enumeration\\in order $n$};
\draw [draw,->] (sat) -- (output);
\end{tikzpicture}
\caption{Outline of our algorithm for enumerating Williamson sequences of order $n$.
The boxes on the left correspond to the preprocessing which encodes and decomposes
the original problem into SAT instances.  The boxes
on the right correspond to an SMT-like setup where the
system that computes the discrete Fourier transform
takes on the role of the theory solver.}\label{fig:diagram}\end{figure}
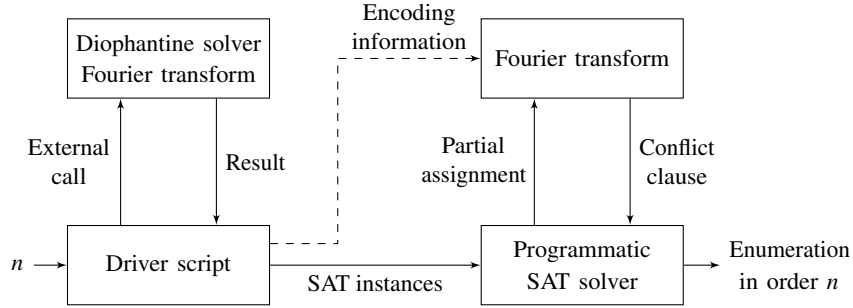

\subsection{Step 1: Generate possible sum-of-squares decompositions}

First, note that by Corollary~\ref{cor:willrowsum} every set of Williamson sequences
gives rise to a decomposition of $4n$ into a sum of four squares.
We query a computer algebra system such as \textsc{Maple}
or \textsc{Mathematica} to get all possible solutions of 
the Diophantine equation~\eqref{eq:willdioeq}.
Because we only care about Williamson sequences up to equivalence,
we add the inequalities
\[ 0 \leq R_A \leq R_B \leq R_C \leq R_D \]
to the Diophantine equation; it is clear that any Williamson
sequence quadruple can be transformed into another quadruple
that satisfies these inequalities by applying the reorder and/or
negating equivalence operations.

\subsection{Step 2: Generate possible Williamson sequence quadruple members}

Next, we form a list of the sequences that could possibly appear as a member of
a Williamson sequence quadruple of order~$n$.
To do this, we examine every symmetric sequence $X\in\brace{\pm1}^n$.
For all such~$X$ we compute $\PSD_X$ and ignore those that satisfy
$\PSD_X(s)>4n$ for some $s$.
We also ignore those~$X$ whose rowsum does not appear in any possible
solution $(R_A,R_B,R_C,R_D)$ of the sum-of-squares
Diophantine equation~\eqref{eq:willdioeq}.
The sequences~$X$ that remain after this process form a list of the sequences
that could possibly appear as a member of a set of Williamson sequences.
At this stage we could generate all Williamson sequences quadruples of order~$n$
by trying all ways of grouping the possible sequences~$X$ into quadruples and
filtering those that are not Williamson.  However, because of the large number
of ways in which this grouping into quadruples can be done
this is not feasible to do except for small~$n$.

\subsection{Step 3: Perform compression}

In order to reduce the size of the problem so that the possible
sequences generated in step~2 can be grouped into quadruples we first
compress the sequences using the process
described in Section~\ref{sec:compression}.
For each solution $(R_A,R_B,R_C,R_D)$ of the sum-of-squares Diophantine equation~\eqref{eq:willdioeq}
we form four lists $L_A$, $L_B$, $L_C$, and $L_D$.  The list $L_A$ will contain
the $m$\nobreakdash-compressions of the sequences $X$ generated in step~2 that have rowsum~$R_A$
(and the other lists
will be defined in a similar manner).
Note that the sequences in these lists will be $\brace{\pm2,0}$-sequences
if $n$ is even and $\brace{\pm3,\pm1}$-sequences if $n$ is odd
since they are $m$\nobreakdash-compressions of the sequences $X$
which are $\brace{\pm1}$-sequences.

\subsection{Step 4: Match the compressions}

By construction, the lists $L_A$, $L_B$, $L_C$, and $L_D$ contain all possible
$m$\nobreakdash-compressions of the members of Williamson sequence quadruples whose sum-of-squares
decomposition is $R_A^2+R_B^2+R_C^2+R_D^2$.  Thus, by trying all possible sum-of-squares
decompositions and all ways of matching
together the sequences from the lists $L_A$, $L_B$, $L_C$, $L_D$ we can find all
$m$\nobreakdash-compressions of Williamson sequence quadruples of order~$n$.
By Theorem~\ref{thm:willcomp}, a necessary condition for $A$, $B$, $C$, $D$
to be Williamson sequences is that
\[ \PSD_{A'}+\PSD_{B'}+\PSD_{C'}+\PSD_{D'} = [4n,\dotsc,4n] \]
where $A'$, $B'$, $C'$, $D'$ are the $m$\nobreakdash-compressions of $A$, $B$, $C$, $D$.
Therefore, one could perform this step by enumerating all
$(A',B',C',D')\in L_A\times L_B\times L_C\times L_D$
and outputting those whose PSDs sum to $[4n,\dotsc,4n]$ as a potential $m$\nobreakdash-compression
of a Williamson sequence quadruple.
However, there will typically be far too many elements of
$L_A\times L_B\times L_C\times L_D$ to try in a reasonable amount
of time.

Instead, we will enumerate all $(A',B')\in L_A\times L_B$ and $(C',D')\in L_C\times L_D$
and use a string sorting technique by~\cite{DBLP:journals/ol/KotsireasKP10} to find which
$(A',B')$ and $(C',D')$ can be matched together to form potential
$m$\nobreakdash-compressions of Williamson sequences.
To determine which pairs can be matched together we use the necessary
condition from Theorem~\ref{thm:willcomp} in a slightly rewritten form,
\[ \PAF_{A'}+\PAF_{B'} = [4n,0,\dotsc,0]-(\PAF_{C'}+\PAF_{D'}) . \]
Our matching procedure outputs a list of the $(A',B',C',D')$
which satisfy this condition, and therefore output a list of potential
$m$\nobreakdash-compressions of Williamson sequences.

In detail, our matching procedure performs the following steps:
\begin{algorithmic}[1]
\State \textbf{initialize} $L_{AB}$ and $L_{CD}$ to empty lists
	\For{$(A',B')\in L_A\times L_B$}
		\If{$\PSD_{A'}(s)+\PSD_{B'}(s)<4n$ for all $s$}
			\State \textbf{add} $\PAF_{A'}+\PAF_{B'}$ to $L_{AB}$
		\EndIf
	\EndFor
	\For{$(C',D')\in L_C\times L_D$}
		\If{$\PSD_{C'}(s)+\PSD_{D'}(s)<4n$ for all $s$}
			\State \textbf{add} $[4n,0,\dotsc,0]-(\PAF_{C'}+\PAF_{D'})$ to $L_{CD}$
		\EndIf
	\EndFor
\For{each $X$ common to both $L_{AB}$ and $L_{CD}$}
	\State \textbf{output} $(A',B')$ and $(C',D')$ which $X$ was generated from in an auxiliary file
\EndFor
\end{algorithmic}

Line~8 can be done efficiently by sorting the lists $L_{AB}$ and $L_{CD}$
and then performing a linear scan through the sorted lists to find the elements
common to both lists.  Line~9 can be done efficiently if with each element in
the lists $L_{AB}$ and $L_{CD}$ we also keep track of a pointer to the sequences
$(A',B')$ or $(C',D')$ that the element was generated from in line~4 or~7.
Also in line~9 if $n$ is even we only
output sequences for which $A'+B'+C'+D'$ is the zero vector mod~$4$ as this
is an invariant of all $2$\nobreakdash-compressed Williamson sequences by Corollary~\ref{cor:willprod}.

\subsection{Step 5: Uncompress the matched compressions}

It is now necessary to find the Williamson sequences, if any, which when compressed
by a factor of~$m$ produce one of the sequence quadruples generated in step~4.
In other words, we want to find a way to perform uncompression on the matched compressions
which we generated.  To do this, we formulate the uncompression problem as a Boolean
satisfiability instance and use a SAT solver's combinatorial search facilities
to search for solutions to the uncompression problem.

We use Boolean variables to represent the entries of the uncompressed
Williamson sequences, with true representing the value of $1$ and false representing
the value of~$-1$.  Since Williamson sequence quadruples consist of four sequences of length~$n$
they contain a total of $4n$ entries, namely,
\[ a_0,\dotsc,a_{n-1},b_0,\dotsc,b_{n-1},c_0,\dotsc,c_{n-1},d_0,\dotsc,d_{n-1} . \]
However, because Williamson sequences are symmetric we actually only need
to define the distinct variables
\[ a_0,\dotsc,a_{\floor{n/2}},b_0,\dotsc,b_{\floor{n/2}},c_0,\dotsc,c_{\floor{n/2}},d_0,\dotsc,d_{\floor{n/2}} . \]
Any variable $x_i$ with $i>n/2$ can simply be replaced with the equivalent variable
$x_{n-i}$; in what follows we implicitly use this substitution when necessary.
Thus, the SAT instances that we generate will contain
$2n+4$ variables when $n$ is even and $2n+2$ variables when $n$ is odd.

Say that $(A',B',C',D')$ is one of the 
$m$\nobreakdash-compressions generated in step~4.
By the definition of $m$\nobreakdash-compression, we have that
$a'_i = a_i + a_{i+n/2}$ if $n$ is even and $a'_i= a_i + a_{i+n/3} + a_{i+2n/3}$
if $n$ is odd.
%(and similarly for the entries of $B'$, $C'$, and $D'$).
Since $-3\leq a'_i\leq 3$ there are
seven possibilities we must consider for each $a'_i$.

Case 1. If $a'_i=3$ then we must have $a_i=1$, $a_{i+n/3}=1$, and $a_{i+2n/3}=1$.
Thinking of the entries as Boolean variables, we add the clauses
\[ a_i \land a_{i+n/3} \land a_{i+2n/3} \]
to our SAT instance.

Case 2. If $a'_i=2$ then we must have $a_i=1$ and $a_{i+n/2}=1$.  Thinking of the
entries as Boolean variables, we add the clauses
\[ a_i \land a_{i+n/2} \]
to our SAT instance.

Case 3. If $a'_i=1$ then exactly one of the entries
$a_i$, $a_{i+n/3}$, and~$a_{i+2n/3}$ must be $-1$.
Thinking of the entries as Boolean variables, we add the clauses
\[ (\lnot a_i\lor\lnot a_{i+n/3}\lor\lnot a_{i+2n/3})\land(a_i\lor a_{i+n/3})\land(a_i\lor a_{i+2n/3})\land(a_{i+n/3}\lor a_{i+2n/3}) \]
to our SAT instance.  These clauses specify in conjunctive normal form
that exactly one of the variables $a_i$, $a_{i+n/3}$, and~$a_{i+2n/3}$ is false.

Case 4. If $a'_i=0$ then we must have $a_i=1$ and $a_{i+n/2}=-1$ or vice versa.
Thinking of the entries as Boolean variables, we add the clauses
\[ (a_i \lor a_{i+n/2}) \land (\lnot a_i \lor \lnot a_{i+n/2}) \]
to our SAT instance.  These clauses specify in conjunctive normal form
that exactly one of the variables $a_i$ and $a_{i+n/2}$ is true.

Case 5. If $a'_i=-1$ then exactly one of the entries
$a_i$, $a_{i+n/3}$, and~$a_{i+2n/3}$ must be $1$.
Thinking of the entries as Boolean variables, we add the clauses
\[ (a_i\lor a_{i+n/3}\lor a_{i+2n/3})\land(\lnot a_i\lor\lnot a_{i+n/3})\land(\lnot a_i\lor\lnot a_{i+2n/3})\land(\lnot a_{i+n/3}\lor\lnot a_{i+2n/3}) \]
to our SAT instance.  These clauses specify in conjunctive normal form
that exactly one of the variables $a_i$, $a_{i+n/3}$, and~$a_{i+2n/3}$ is true.

Case 6. If $a'_i=-2$ then we must have $a_i=-1$ and $a_{i+n/2}=-1$.  Thinking of the
entries as Boolean variables, we add the clauses
\[ \lnot a_i \land \lnot a_{i+n/2} \]
to our SAT instance.

Case 7. If $a'_i=-3$ then we must have $a_i=-1$, $a_{i+n/3}=-1$, and $a_{i+2n/3}=-1$.
Thinking of the entries as Boolean variables, we add the clauses
\[ \lnot a_i \land \lnot a_{i+n/3} \land \lnot a_{i+2n/3} \]
to our SAT instance.

For each entry $a'_i$ in $A'$ we add the clauses from the
appropriate case to the SAT instance, as well
as add clauses from a similar case analysis for the entries
from $B'$, $C'$, and $D'$.
A satisfying assignment to the generated SAT instance provides an uncompression
$(A,B,C,D)$ of $(A',B',C',D')$.  However, the uncompression need not be a set of
Williamson sequences.  To ensure that the solutions produced by the SAT solver
are in fact Williamson sequences we additionally use the programmatic SAT
Williamson encoding as described in Section~\ref{sec:progwill}.

For each $(A',B',C',D')$ generated in step~4 we generate a SAT instance which
contains the clauses specified above.  We then solve the SAT instances with
a programmatic SAT solver whose programmatic clause generator specifies
that any satisfying assignment of the instance encodes a set of Williamson sequences
%(see Section~\ref{sec:progwill})
and performs an exhaustive search to find all solutions.
By construction, every Williamson sequence quadruple of order~$n$ will have its
$m$\nobreakdash-compression generated in step~4, making this search
totally exhaustive (up to the discarded equivalences).

\subsection{Postprocessing: Remove equivalent Williamson sequences}\label{sec:equivcheck}

After step~5 we have produced a list of all the Williamson sequences of order $n$
that have a certain sum-of-squares decompositions.  We chose the
decompositions in such a way that every Williamson sequence quadruple will be equivalent to
at least one decomposition but we have not dealt with all equivalences E1--E5. Therefore
some Williamson sequences that we generate may be equivalent to each other.
Removing equivalences can be performed in step~5 but since the SAT instances
are generally solved in parallel we wait until all SAT instances have been solved
for simplicity.

For the purpose of counting the total number of inequivalent
Williamson sequences that exist in order $n$ it is necessary to
examine each quadruple in the list generated in step~5
and determine if it is equivalent to another quadruple in the list.  This can be done
by repeatedly applying the equivalence operations from Section~\ref{sec:willequiv}
on the quadruples in the list
and discarding those which are equivalent to a previously
found set of Williamson sequences.
However, this can be inefficient because there are
typically a large number of Williamson sequence quadruples in each
equivalence class.  Instead, a more efficient way of testing for equivalence
is to define a single representative in each equivalence class
that is easy to compute.  Then two sets of Williamson sequences can
be tested for equivalence by testing that their representatives are equal.

As a first step in defining a single representative
in each equivalence class of Williamson
sequence quadruples we first consider only the equivalence operations E1, E2, and~E3 (reorder,
negate, and shift).  The operations E2 and~E3 apply to individual sequences $X$ and
there are up to four sequences which could be generated using E2 and~E3, namely,
$X$, $\E2(X)$, $\E3(X)$, and $\E2(\E3(X))$.  Let $M_X$ be the lexicographic
minimum of these four sequences.  Given a Williamson sequence quadruple $(A,B,C,D)$, we compute
$(M_A,M_B,M_C,M_D)$ and then use operation E1 on the sequences in the quadruple
to sort those sequences in increasing lexicographic order.  The resulting quadruple
is the lexicographic minimum of all quadruples equivalent to $(A,B,C,D)$
using the operations E1, E2, and~E3 and is therefore a unique single representative of the
equivalence class which we denote $M_{(A,B,C,D)}$.

Next, consider the equivalence operation~E4 (permute entries).  Let~$\sigma$
be an automorphism of the cyclic group~$C_n$ and let $\sigma(X)$ be the sequence
whose $i$th entry is $x_{\sigma(i)}$.  Then the lexicographic minimum of the set
\[ S_{(A,B,C,D)} \coloneqq \brace[\big]{\, M_{(\sigma(A),\sigma(B),\sigma(C),\sigma(D))} : \sigma\in\Aut(C_n) \,} \]
is the lexicographic minimum of all quadruples equivalent to $(A,B,C,D)$
using the operations E1, E2, E3, and~E4.  (This is due to the fact that
E4 commutes with E1, E2, and~E3, so it is always possible to find the
global lexicographic minimum by first trying all possible ways of applying~E4
and only afterwards considering E1, E2, and~E3.)

Finally, if $n$ is even we consider the equivalence operation~E5 (alternating negation).
The lexicographic minimum of the set
$S_{(A,B,C,D)} \cup S_{\E5(A,B,C,D)}$
will be the lexicographic minimum of all quadruples equivalent to $(A,B,C,D)$
and is therefore a single unique representative of the equivalence class.
(Again, this is due to the
fact that E5 commutes with the other operations so it is always possible
to find the global lexicographic minimum by
first trying all possible ways of applying~E5 before applying the other operations.)

\subsection{Optimizations}\label{sec:optimizations}

While the procedure just described will correctly enumerate all Williamson sequences of
a given even order~$n$, there are a few optimizations that can be used to improve
the efficiency of the search.
Note that in step~3 we have not generated \emph{all} possible
$m$\nobreakdash-compression quadruples; we only generate those quadruples that have rowsums
$(R_A,R_B,R_C,R_D)$ that correspond to solutions of~\eqref{eq:willdioeq}, and we
use the negation and reordering equivalence operations to cut down the number
of possible rowsums necessary to check.  However, there still remain equivalences that
can be removed; if~$\sigma$ is an automorphism of the cyclic group $C_n$ then
$(A,B,C,D)$ is a Williamson sequence quadruple if and only if $(\sigma(A),\sigma(B),\sigma(C),\sigma(D))$
is a Williamson sequence quadruple (with $\sigma(X)$ defined so that its $i$th entry is $x_{\sigma(i)}$).
Thus if both $A$ and $\sigma(A)$ are in the list generated
in step~2 we can remove one from consideration.  Unfortunately, we cannot do the same
in the lists for $B$, $C$, and~$D$, since it is not possible to know which
representatives for $B$, $C$, and~$D$ to keep, as the representatives 
must match with the representative for $A$ that was kept.  

Similarly, in step~5 one can ignore any SAT instance that can be transformed
into another SAT instance using the equivalence operations from Section~\ref{sec:willequiv}.
In this case the solutions in the ignored SAT instance will be equivalent to those
in the SAT instance associated to it through the equivalence transformation.

In the programmatic Williamson encoding we can often learn shorter clauses
with a slight modification of the procedure described in Section~\ref{sec:progwill}.
Instead of checking $\sum_{X\in S}\PSD_X(s)>4n$ directly we instead
find the smallest subset $S'$ of $S$ such that $\sum_{X\in S'}\PSD_X(s)>4n$
(if such a subset exists).  This is done by sorting the values
of $\PSD_X(s)$ %for $X\in S$
and performing the check using the largest values
$\PSD_X(s)$ before considering the smaller values.  For example,
if $\PSD_B(s)>\PSD_A(s)$ then we would check $\PSD_B(s)>4n$ before checking
$\PSD_A(s)+\PSD_B(s)>4n$.
%$\sum_{X\in S'}\PSD_X(s)>4n$ where $S'$ is a subset of $S$ containing the 

When $n$ is odd we use Theorem~\ref{thm:willprododd} to provide additional
information to the SAT solver.  For simplicity, suppose we fix $a_0=1$;
as shown in~\citep[\S3.1.2]{DBLP:phd/basesearch/Bright17}
this can be done by fixing the sign of $\rowsum(A)$ to not necessarily
be positive but to satisfy $\rowsum(A)\equiv n\pmod{4}$.
Briefly, this is because the symmetry of $A$ implies that the rowsum of $A$
will always be congruent to $a_0+n-1\pmod{4}$.
Also fixing the values $b_0=c_0=d_0=1$, Theorem~\ref{thm:willprododd} says that
\[ a_kb_kc_kd_k = -1 \qquad \text{for $k=1$, $\dotsc$, $(n-1)/2$} . \]
Thinking of the entries as Boolean variables, we can encode the
multiplicative constraint in conjunctive normal form as
\[ (a_k\lor b_k\lor c_k\lor d_k)\land(\lnot a_k\lor\lnot b_k\lor c_k\lor d_k)\land\dotsb\land(\lnot a_k\lor\lnot b_k\lor\lnot c_k\lor\lnot d_k) \]
(that is, all the clauses on the four variables $a_k$, $b_k$, $c_k$, $d_k$
with an even number of negative literals).
%for $k=1$, $\dotsc$, $(n-1)/2$.  These clauses 
We add these clauses for $k=1$, $\dotsc$, $(n-1)/2$ into each SAT instance
generated in each odd order $n$.

\section{Results}\label{sec:results}

We implemented the algorithm described in Section~\ref{sec:sat+casmethod},
including all optimizations,
and ran it in all orders $n\leq70$ divisible by $2$ or $3$.
Step~1 was completed using the script \textsc{nsoks}~\citep{riel2006nsoks} in
the computer algebra system \textsc{Maple~18}. 
Steps~2--4 and the postprocessing were completed
using C++ code which used the library FFTW~3.3.6-pl2 by \cite{frigo2005design}
for computing PSD values.  Step~5 was completed using 
\textsc{MapleSAT}~\citep{DBLP:conf/sat/LiangKPCG17} modified to support a programmatic
interface and also used FFTW for computing PSD values.  
Since FFTW introduces some floating-point errors in the values it returns,
when checking the $\PSD$ values of $A$ we actually ensure that
$\PSD_A(s)>4n+\epsilon$ for some $\epsilon$ which is small but larger than
the accuracy of the discrete Fourier transform used,
e.g., $\epsilon=10^{-2}$.
Our computations were performed on a cluster
of 64\nobreakdash-bit Intel Xeon E5-2683V4 2.1~GHz processors limited to
6~GB of memory and running \mbox{CentOS}~7.

Timings for running our entire algorithm (in hours) in even orders
are given in Table~\ref{tbl:results}, and timings for the running
of the SAT solver alone are given in Table~\ref{tbl:satresults}.
The bottleneck of our method
for large even~$n$ was the matching procedure described in step~4,
which requires enumerating and then sorting a very large number of vectors.
For example, when $n=64$ and $R_A=R_B=8$ there were over $26.6$ billion
vectors added to~$L_{AB}$.
Table~\ref{tbl:results} also includes the number of SAT instances
that we generated in each order, as well as the total number of sets of
Williamson sequences that were found up to equivalence 
(denoted by $\#W_n$).  The counts for $\#W_n$ are not identical to those
given in~\citep{bright2017sat+} because that work did not
use the equivalence operation E5 (alternating negation) but the
results up to order $64$ (the largest order previously solved)
are otherwise identical.

\begin{table}
\begin{center}
\begin{tabular}{c@{\qquad}c@{\qquad}c@{\qquad}c}
$n$        & Time (h)   & \# inst.   & $\#W_n$    \\ \hline
2          & 0.00       & 1          & 1          \\
4          & 0.00       & 1          & 1          \\
6          & 0.00       & 1          & 1          \\
8          & 0.00       & 1          & 1          \\
10         & 0.00       & 2          & 2          \\
12         & 0.00       & 3          & 3          \\
14         & 0.00       & 3          & 5          \\
16         & 0.00       & 5          & 6          \\
18         & 0.00       & 22         & 23         \\
20         & 0.00       & 14         & 17         \\
22         & 0.00       & 22         & 15         \\
24         & 0.00       & 40         & 72         \\
26         & 0.00       & 24         & 26         \\
28         & 0.00       & 78         & 83         \\
30         & 0.00       & 281        & 150        \\
32         & 0.00       & 70         & 152        \\
34         & 0.00       & 214        & 91         \\
36         & 0.00       & 1013       & 477        \\
38         & 0.00       & 360        & 50         \\
40         & 0.01       & 4032       & 1499       \\
42         & 0.02       & 2945       & 301        \\
44         & 0.01       & 1163       & 249        \\
46         & 0.03       & 1538       & 50         \\
48         & 0.09       & 4008       & 9800       \\
50         & 0.45       & 3715       & 275        \\
52         & 0.78       & 4535       & 926        \\
54         & 3.00       & 25798      & 498        \\
56         & 0.98       & 18840      & 40315      \\
58         & 15.97      & 9908       & 73         \\
60         & 27.14      & 256820     & 4083       \\
62         & 64.74      & 19418      & 61         \\
64         & 65.52      & 34974      & 69960      \\
66         & 764.96     & 109566     & 262        \\
68         & 593.77     & 122150     & 1113       \\
70         & 957.96     & 71861      & 98         
\end{tabular}\end{center}
\caption{A summary of the running time in hours, number
of SAT instances used, and number of inequivalent sets of
Williamson sequences generated in each even order $n\leq70$.}\label{tbl:results}
\end{table}

\begin{table}
\begin{center}
\begin{tabular}{ccc}
& \multicolumn{2}{c}{SAT Solving Time (hours)} \\
$n$      & CNF encoding  & Programmatic  \\ \hline
30       & 0.01          & 0.00          \\
32       & 0.01          & 0.00          \\
34       & 0.03          & 0.00          \\
36       & 0.36          & 0.00          \\
38       & 0.12          & 0.00          \\
40       & 2.01          & 0.01          \\
42       & 4.31          & 0.01          \\
44       & 5.59          & 0.00          \\
46       & 8.65          & 0.01          \\
48       & 18.78         & 0.02          \\
50       & 53.12         & 0.02          \\
52       & $-$           & 0.05          \\
54       & $-$           & 0.18          \\
56       & $-$           & 0.18          \\
58       & $-$           & 0.13          \\
60       & $-$           & 9.56          \\
62       & $-$           & 0.45          \\
64       & $-$           & 0.94          \\
66       & $-$           & 5.14          \\
68       & $-$           & 17.18         \\
70       & $-$           & 8.07          
\end{tabular}\end{center}
\caption{The total time spent running \textsc{MapleSAT}
in each even order $30\leq n\leq 70$ using the CNF encoding
and the programmatic encoding. A timeout of 100 hours was used.}\label{tbl:satresults}
\end{table}

Table~\ref{tbl:oddresults} contains the same information
as Table~\ref{tbl:results} except in the odd orders.
The bottleneck for our algorithm in these orders
was the uncompression step (since uncompressing by a factor
of $3$ is more challenging than uncompressing by a factor of $2$),
i.e., solving the SAT instances.
The counts for $\#W_n$ in these cases exactly match
those given by \cite{holzmann2008williamson} up to~$59$,
the largest order they solved.
We found one previously unknown Williamson sequence of
order $63$ using 466,561 $3$\nobreakdash-compressed
quadruples.
We give this Williamson sequence here explicitly, with `\verb|+|'
representing $1$, `\verb|-|' representing $-1$, and
each sequence member on a new line:
{\microtypesetup{activate=false}
\begin{center}
\verb|+----+-+++-+-++++--++--+-----+-++-+-----+--++--++++-+-+++-+----| \\
\verb|+--++-+-+--++++----+-++-++---+----+---++-++-+----++++--+-+-++--| \\
\verb|+--++-+++-+++---++-----+-+---+----+---+-+-----++---+++-+++-++--| \\
\verb|+++++-++----+++-+-++---+-++++-+--+-++++-+---++-+-+++----++-++++|
\end{center}}

\begin{table}
\begin{center}
\begin{tabular}{c@{\qquad}c@{\qquad}c@{\qquad}c}
$n$        & Time (h)   & \# inst.   & $\#W_n$    \\ \hline
3          & 0.00       & 1          & 1          \\
9          & 0.00       & 3          & 3          \\
15         & 0.00       & 8          & 4          \\
21         & 0.00       & 30         & 7          \\
27         & 0.00       & 172        & 6          \\
33         & 0.01       & 364        & 5          \\
39         & 0.05       & 1527       & 1          \\
45         & 1.12       & 15542      & 1          \\
51         & 4.57       & 17403      & 2          \\
57         & 61.26      & 58376      & 1          \\
63         & 1670.95    & 466561     & 2          \\
69         & 8162.50    & 600338     & 1          
\end{tabular}\end{center}
\caption{A summary of the running time in hours, number
of SAT instances used, and number of inequivalent sets of
Williamson sequences generated in each odd order $n$ divisible by $3$ and less than $70$.}\label{tbl:oddresults}
\end{table}

We also used our enumeration of Williamson sequences of order $2n$ for $n\leq35$
along with Theorem~\ref{thm:8will} to explicitly construct 8-Williamson sequences in all
odd orders $n\leq35$.  Table~\ref{tbl:8willresults} contains the counts %(denoted
%by $\#8W_n$)
of how many inequivalent sets of 8\nobreakdash-Williamson sequences that can be constructed in this fashion
(this does not count the total number of 8-Williamson
sequences in order $n$, only those that can be constructed via
the construction of Theorem~\ref{thm:8will}).  We explicitly give one example
of an 8-Williamson sequence of order $35$, with `\verb|+|'
representing $1$ and `\verb|-|' representing~$-1$:
{\small\microtypesetup{activate=false}
\begin{center}
\verb|++++-+++--+-+----++----+-+--+++-+++ +++---+-+++-++--+--+--++-+++-+---++| \\
\verb|++-+-+++-+-----++++++-----+-+++-+-+ ++-+--+--++---+-++++-+---++--+--+-+| \\
\verb|++---++-+-+--+--------+--+-+-++---+ ++---++-+-+--+--------+--+-+-++---+| \\
\verb|+--+++-----+---+-++-+---+-----+++-- +---++--++++-++-+--+-++-++++--++---|
\end{center}}\noindent
These sequences can be used to generate a Hadamard matrix of order $8\cdot35=280$;
for details see \cite{kotsireas2006constructions,kotsireas2009hadamard}.

\begin{table}
\begin{center}
\begin{tabular}{cccccccccc}
$n$      & 1 & 3 & 5 & 7 & 9 & 11 & 13 & 15 & 17 \\
$\#8W_n\geq$ & 1 & 1 & 1 & 4 & 13 & 10 & 18 & 129 & 79 \\[-0.75em] \\
$n$      & 19 & 21 & 23 & 25 & 27 & 29 & 31 & 33 & 35 \\
$\#8W_n\geq$ & 43 & 280 & 48 & 257 & 486 & 71 & 58 & 240 & 78 
\end{tabular}
\end{center}
\caption{The number of inequivalent sets of 8-Williamson sequences generated using Theorem~\ref{thm:8will}
in each odd order $n\leq35$ (and therefore a lower bound on $\#8W_n$, the number of inequivalent
sets of 8-Williamson sequences in order $n$).}\label{tbl:8willresults}
\end{table}

Additionally, our code and an explicit enumeration of all the Williamson
sequences and 8-Williamson sequences that we constructed have been made available
on our website \href{https://uwaterloo.ca/mathcheck}{\nolinkurl{uwaterloo.ca/mathcheck}}.

\section{Conclusion and advice}\label{sec:conclusion}

In this paper we have shown the power of the SAT+CAS paradigm (i.e.,
the technique of applying the tools from the fields of satisfiability checking
and symbolic computation)
as well as the power and flexibility of the programmatic SAT approach.
Our focus was applying the SAT+CAS paradigm to the Williamson conjecture
from combinatorial design theory, but we believe the SAT+CAS paradigm
shows promise to be applicable to many other problems and conjectures.
In fact, the SAT+CAS paradigm has recently been used to enumerate
complex Golay pairs~\citep{bright2018enumeration} and good matrices~\citep{bright2019good}.
However, the SAT+CAS paradigm is not something that can
be effortlessly applied to problems or expected to be effective
on all types of problems.
Our experiments in this area allow us to offer some guidance about
the kind of problems in which the SAT+CAS paradigm would work
particularly well.
In particular, \cite{DBLP:phd/basesearch/Bright17}~highlights the
following properties of problems which makes them good candidates
to study using the SAT+CAS paradigm:
\begin{enumerate}
\item \emph{There is an efficient encoding of the problem into a Boolean setting.}
Since the problem has to be translated into a SAT instance or multiple SAT instances
the encoding should ideally be straightforward and easy to compute.  Not only does
this make the process of generating the SAT instances easier and less error-prone
it also means that the SAT solver is executing its search through a domain which
is closer to the original problem. In general, the more convoluted the encoding the less likely
the SAT solver will be able to efficiently search the space.  For example, in our
application we were fortunate to be able to encode $\pm1$ values as Boolean variables.
\item \emph{There is some way of splitting the Boolean formula into multiple instances
using the knowledge from a CAS\@.}  Of course, a SAT instance can always be split into multiple
instances by hard-coding the values of certain variables and then generating instances
which cover all possible assignments of those variables.  However, this strategy is typically
not a good way of splitting the search space. The instances generated in this fashion
tend to have wildly unbalanced difficulties, with some very easy instances and some
very hard instances, limiting the benefits of using many processors to search the space.
Instead, the process of splitting using domain-specific
knowledge allows instances which cannot be ruled out \emph{a priori} to not
even need to be generated because they encode some part of the search space which
can be discarded based on domain-specific knowledge.  For example, in our application
we only needed to generate SAT instances with a few possibilities for the rowsums
of the sequences $A$, $B$, $C$, and~$D$ and could ignore all other possible rowsums.
\item \emph{The search space can be split into a reasonable number of cases.}
One of the disadvantages of using SAT solvers is that it can be difficult to tell how much progress
is being made as the search is progressing.
%In our experience, if the solver has not finished
%running in 24 hours then it is unlikely to finish at all in a reasonable amount of time.
The process of splitting the search space allows one to get a better estimate of the progress
being made, assuming the difficulty of the instances isn't extremely unbalanced.
In our experience, splitting the search space into instances which can be solved
relatively quickly worked well, assuming the number of instances isn't too large so
that the overhead of calling the SAT solver is small.
This allowed the space to be searched significantly faster (especially
when using multiple processors) than a single instance would have taken to complete.
%while not having too many calls to the SAT solver.
%Splitting the search
%space into too many instances is suboptimal because with a very large
%number of cases the overhead of repeatedly calling a SAT solver
%becomes more significant to the total running time.
%In our application the order $n=63$
%required the most amount of splitting; in this case we split the search
%space into 466,561 SAT instances and each instance took an average of $12.9$ seconds to solve.
In our application the order $n=69$
required the most amount of splitting; in this case we split the search
space into 600,338 SAT instances and each instance took an average of $48.8$ seconds to solve.
\item \emph{The SAT solver can learn something about the space as the search is running.}
The efficiency of SAT solvers is in part due to the facts that they learn as the search
progresses.  It can often be difficult for a human to make sense of these facts but they play a vital
role to the SAT solver internally and therefore a problem where the SAT solver can take
advantage of its ability to learn nontrivial clauses is one in which the SAT+CAS paradigm is
well suited for.  %As an example, we showed in Section~\ref{subsec:unsatcoreresults} that
%the SAT solver was able to learn Lemma~\ref{lem:comp2middle} on its own.
For more sophisticated
lemmas that the SAT solver would be unlikely to learn (because they
rely on domain-specific knowledge) it is useful to learn clauses programmatically via
the programmatic SAT idea~\citep{DBLP:conf/sat/GaneshOSDRS12}.
For example, the timings in Table~\ref{tbl:satresults} show how
important the learned programmatic clauses were to the efficiency of the SAT solver.
\item \emph{There is domain-specific knowledge which can be efficiently given to the SAT solver.}
Domain-specific knowledge was found to be critical to solving instances of the
problems besides those of the smallest sizes.
The instances which were generated
using naive encodings were typically only able to be solved for small sizes
%and past a certain point the instances became too expensive to solve.
and all significant increases in the size of the problems past that point %we were able to solve came
came from the usage of domain-specific knowledge. %, showing how valuable it is to the SAT solver.
Of course, for the information to be useful to the solver there needs to be an efficient way
for the solver to be given the information; it can be encoded directly in the SAT instances
or generated on-the-fly using programmatic SAT functionality.
For example, in our application we show how to encode
Williamson's product theorem for odd orders $n$ directly into the SAT instance in Section~\ref{sec:optimizations}
and we show how to programmatically encode the PSD test in Section~\ref{sec:progwill}.
%\item \emph{The constraints of the problem specify a space which does not have a simple enumeration algorithm.}
\item \emph{The solutions of the problem lie in spaces which cannot be simply enumerated.}
If the search space is highly structured and there exists an efficient search algorithm %for searching
that exploits that structure then using this algorithm directly
is likely to be a better choice. %more efficient than using a SAT solver.
%For example, in Chapter~\ref{sec:alg} we use an algorithm for generating permutations % of a certain form
%and the form allows the enumeration to proceed with little overhead.
%whose form allows them to be enumerated with little overhead.
A SAT solver could also perform this search
but would probably do so less efficiently; instead,
%but it is not likely to perform better than an algorithm
%specifically designed to do this; instead,
SAT solvers have a relative advantage when the search space is less structured.
%but they are not likely to do it faster
%than an algorithm specifically designed to do this;
%SAT solvers have a relative advantage.
%unless there are lemmas which allow one to avoid enumerating \emph{all} such permutations.
For example, in our application we require searching for sequences
whose compressions are equal to some given sequence and use the PSD test to filter
certain sequences from consideration.  The space is specified by a number of simple 
but ``messy'' constraints and SAT solvers are good at dealing with that kind of complexity.
%, i.e., brute force
%does not have to be used.
%unless the SAT solver can learn
%If the search space is very structured and %this structure admits an algorithm for searching through
%there is a simple algorithm for enumerating the structures in the space then it
%is likely the most efficient way to search the space is using that algorithm directly.
%For example, 
%the space then using that algorithm directly may well be more efficient than using the SAT+CAS paradigm
%as there is necessarily some overhead in the translation process to SAT.
\end{enumerate}
%As compiled by~\cite{DBLP:phd/basesearch/Bright17} we offer the
%following explicit pieces of advice:
%our experiments have shown that 
%We have done this by developing a programmatic SAT+CAS method
%to solve the long-standing problem of generating Williamson matrices of even order.
%This problem has been well-studied since 1944 in the odd order case and
%counts for the number of Williamson matrices up to equivalence
%have been published for all odd orders up to $59$ 
%but prior to our work such counts were unavailable in the even case.

Perhaps the most surprising result of our work on the Williamson conjecture
is our discovery that there are typically many more Williamson matrices
in even orders than there are in odd orders.  In fact, every odd order $n$
in which a search has been carried out has $\#W_n\leq10$,
while we have shown that every even order $18\leq n\leq 70$ has $\#W_n>10$
and there are some orders which contain thousands of inequivalent
Williamson matrices.
Part of this dichotomy can be explained by Theorem~\ref{thm:willdbl}
which generates Williamson matrices of order $2n$ from Williamson matrices
of odd order $n$.
For example, the two classes of Williamson matrices
of order $10$ can be generated from the single class of Williamson matrices
of order $5$.
However, this still does not fully explain the relative abundance of Williamson
matrices in even orders.  In particular,
it cannot possibly explain why Williamson matrices exist in order $70$
because Williamson matrices of order $35$ do not exist.
%A theoretical reason to explain this dichotomy
%would be interesting, though we currently know of no such reason.
%We hope that our work brings attention to this problem which could
%lead to a better understanding
%of the behaviour of Williamson matrices of even order.

Recently \cite{acevedo2019new} discovered
a remarkable construction for Williamson matrices using perfect
quaternionic sequences.  Their construction can be used to explain the existence
of Williamson matrices of order~$70$
and we were able to use their method to construct~$40$
of the~$98$ inequivalent sets of Williamson matrices that we found in order~$70$.
%but we were not able
%to use their method to construct \emph{all} sets of inequivalent
%Williamson matrices that we found in order~$70$.

\section*{Acknowledgements}

%We thank the anonymous reviewers for their detailed
%comments which improved the exposition of this paper.
This work was made possible because of the resources available on the
petascale supercomputer Graham at the University of Waterloo,
managed by Compute Canada and SHARCNET,
the Shared Hierarchical Academic Research Computing Network.
The authors thank Dragomir \DJ okovi\'c %for his constructive comments and
informing us about the construction of \cite{turyn1970}
using complex Hadamard matrices.
We also thank the reviewers for their very detailed reading of our work
and providing comments that greatly improved this article.

\bibliography{jsc-willsat}

\section*{Appendix: Proofs}

\renewcommand{\thetheorem}{\ref{thm:willprodeven}}
\begin{theorem}
If\/ $A$, $B$, $C$, $D$ are Williamson sequences of even order\/ $n=2m$ then%\/
\[a_ib_ic_id_i=a_{i+m}b_{i+m}c_{i+m}d_{i+m} \qquad \text{for\/ $0\leq i<m$.} \]
\end{theorem}
\begin{proof}
We can equivalently consider members of Williamson sequences to be
elements of the group ring $\Z[C_n]$ where $C_n$ is a cyclic group
of order $n$ with generator $u$.  In such a formulation we have
$X=x_0+x_1u+\dotsb+x_{n-1}u^{n-1}$ and Williamson sequences are
quadruples $(A,B,C,D)$ whose members have $\pm1$ coefficients,
whose coefficients form symmetric sequences of length $n$, and which satisfy
\[ A^2 + B^2 + C^2 + D^2 = 4n . \]
Let $P_X=\sum_{x_i=1} u^i$ (with the sum over $0\leq i<n$) and let $p_X$ denote the number of positive
coefficients in $X$.
As shown in~\cite[14.2.20]{hall1998combinatorial} we have that
$P_A^2 + P_B^2 + P_C^2 + P_D^2$ is equal to
\[ (p_A+p_B+p_C+p_D-n)\sum_{i=0}^{n-1}u^i + n . \tag{1}\label{eq:prodeq1} \]
Furthermore, by the fact that $P_X^2\equiv\sum_{x_i=1} u^{2i}\pmod{2}$,
$P_A^2 + P_B^2 + P_C^2 + P_D^2$ is
congruent to
\[ \sum_{a_i=1} u^{2i} + \sum_{b_i=1} u^{2i} + \sum_{c_i=1} u^{2i} + \sum_{d_i=1} u^{2i} \pmod{2} . \tag{2}\label{eq:prodeq2} \]
Now, if $n$ is even then~\eqref{eq:prodeq1} reduces to
\[ (p_A+p_B+p_C+p_D)\sum_{i=0}^{n-1}u^i \pmod{2} \]
so all coefficients are the same mod $2$.  Since by~\eqref{eq:prodeq2} the
coefficients with odd index are $0$ mod $2$, all coefficients in~\eqref{eq:prodeq1}
and~\eqref{eq:prodeq2} must be $0$ mod $2$.

Note that $u^k=u^{2i}$ has exactly 2 solutions for given even~$k$ with $0\leq k<n$, namely,
$i=k/2$ and $i=(k+n)/2$.  Then~\eqref{eq:prodeq2} can be rewritten as
\[ \sum_{a_{k/2}=1} u^k + \sum_{a_{(k+n)/2}=1} u^k + \dotsb + \sum_{d_{(k+n)/2}=1} u^k \pmod{2} \]
where the sums are over the even $k$ with $0\leq k<n$.
Since each coefficient must be $0$ mod $2$, there must be an even number of $1$s
among the entries $a_{k/2}$, $a_{(k+n)/2}$, $\dotsc$, $d_{(k+n)/2}$ for each even $k$ with $0\leq k<n$, i.e.,
\[ a_{k/2}a_{(k+n)/2}b_{k/2}b_{(k+n)/2}c_{k/2}c_{(k+n)/2}d_{k/2}d_{(k+n)/2} = 1 . \]
The required result is a rearrangement of this and rewriting with
the definition $i=k/2$.
\end{proof}

\renewcommand{\thetheorem}{\ref{cor:willprod}}
\begin{corollary}
If\/ $A'$, $B'$, $C'$, $D'$ are the\/ $2$\nobreakdash-compressions of a set of Williamson sequences
then\/ $A'+B'+C'+D'\equiv[0,\dotsc,0]\pmod{4}$.
\end{corollary}
\begin{proof}
Let $N_+$ and $N_-$ denote the number of $1$s and $-1$s in the eight Williamson sequence entries
$a_i$, $b_i$, $c_i$, $d_i$, $a_{i+m}$, $b_{i+m}$, $c_{i+m}$, and $d_{i+m}$,
where $0\leq i<m$.  We have that $N_++N_-=8$ and that
$N_+-N_-=a'_i+b'_i+c'_i+d'_i$
(the sum of the above eight Williamson sequence entries).
Thus the $i$th entry of $A'+B'+C'+D'$ is
$N_+-N_-=N_+-(8-N_+)=2N_+-8\equiv0\pmod 4$ since
Theorem~\ref{thm:willprodeven} implies that $N_+$ must be even.
\end{proof}

\renewcommand{\thetheorem}{\ref{thm:willdbl}}
\begin{theorem}
Let $A$, $B$, $C$, $D$ be Williamson sequences of odd order $n$.
Then
{\rm\[ A\shuffle B',\, (-A)\shuffle B',\, C\shuffle D',\, (-C)\shuffle D' \]}%
are Williamson sequences of order $2n$.
\end{theorem}
\begin{proof}
The fact that the constructed sequences have $\pm1$ entries are of length $2n$
follows directly from the properties of the three types of operations used to generate them.
The fact that they are symmetric follows from the fact that the
sequences $X$ which appear to the left of $\shuffle$ satisfy $x_k=x_{n-k}$
for $k=1$, $\dotsc$, $n-1$ and the sequences $Y$ which appear to the right
of $\shuffle$ satisfy $y_k=y_{n-k-1}$ for $k=0$, $\dotsc$, $n-1$ which are
exactly the necessary properties for $X\mspace{-1mu}\shuffle\mspace{1mu}Y$ to be symmetric.

Let $L$ be the list containing the constructed sequences of order $2n$.
To show these sequences are Williamson we need to show that
\[ \sum_{X\in L}\PAF_X(s) = 0 \]
for $s=1$, $\dotsc$, $n$.  Using the properties that %from Section~\ref{sec:seqop}
$\PAF_{-X}(s)=\PAF_X(s)$, $\PAF_{X'}(s)=\PAF_X(s)$, and
$\PAF_{X\shuffle Y}(s)=\PAF_X(s/2)+\PAF_Y(s/2)$ when $s$ is even and in this range
we obtain
\[ \sum_{X\in L}\PAF_X(s) = 2\sum_{X=A,B,C,D}\PAF_X(s/2) = 0 \]
since $A$, $B$, $C$, $D$ are Williamson.  When $s$ is odd we have that
\[ \PAF_{(-X)\shuffle Y}(s) = -\PAF_{X\shuffle Y}(s) \]
and using this for $(X,Y)=(A,B')$ and $(C,D')$ derives the desired property.
\end{proof}

\renewcommand{\thetheorem}{\ref{thm:8will}}
\begin{theorem}
Let $A$, $B$, $C$, $D$ be Williamson sequences of order $2n$
with $n$ odd and write
{\rm\[ A = A_1\shuffle A_2',\, B = B_1\shuffle B_2',\, C = C_1\shuffle C_2',\, D = D_1\shuffle D_2' . \]}%
Then $A_1$, $A_2$, $B_1$, $B_2$, $C_1$, $C_2$, $D_1$, $D_2$ are 8-Williamson sequences of order $n$.
\end{theorem}
\begin{proof}
The fact that the constructed sequences are symmetric, have $\pm1$ entries,
and are of order $n$ follows directly from the construction and because
the sequences they are constructed from are symmetric,
have $\pm1$ entries, and are of order $2n$.  Since $A$, $B$, $C$, $D$ are
Williamson we have that
\[ \sum_{X=A,B,C,D}\PAF_X(2s) = 0 \qquad\text{for $s=1$, $\dotsc$, $n-1$} . \]
Using the fact that $\PAF_{X\shuffle Y}(2s)=\PAF_X(s)+\PAF_Y(s)$
and $\PAF_{Y'}(s)=\PAF_Y(s)$ this sum becomes exactly the Williamson property
$\sum_{i=1}^8 \PAF_{X_i}(s) = 0$.
\end{proof}

\end{document}